\newcommand{\lb}{\label}
\newcommand{\be}{\begin{equation}}
\newcommand{\ee}{\end{equation}}
\newcommand{\ben}{\begin{eqnarray*}}
\newcommand{\een}{\end{eqnarray*}}
\newcommand{\bea}{\begin{eqnarray}}
\newcommand{\eea}{\end{eqnarray}}
\newcommand{\md}{{\mathrm{d}}}
\newcommand{\beq}{\begin{equation}}
\newcommand{\eeq}{\end{equation}}
\newcommand{\beqn}{\begin{equation}\nonumber}
\newcommand{\bean}{\begin{eqnarray}\nonumber}
\DeclareMathAlphabet{\mathpzc}{OT1}{pzc}{m}{it}
\def\cR{{\mathcal R}}
\def\cH{{\mathcal H}}
\def\cQ{{\mathcal Q}}
\def\cD{{\mathcal D}}
\def\cL{{\mathcal L}}
\def\cM{{\mathcal M}}
\def\cC{\underline{{\mathcal C}}}
\def\bbN{{\mathbb N}}
\def\bbR{{\mathbb R}}
\def\bbS{{\mathbb S}}
\def\dV{{\mbox{dV}^4_g}}
\newtheorem{thm}{Theorem}[section]
\newtheorem{ass}[thm]{Assumption}
\newtheorem{lem}[thm]{Lemma}
\newtheorem{prop}[thm]{Proposition}
\newtheorem{cor}[thm]{Corollary}
\newtheorem{rmk}[thm]{Remark}
\renewenvironment{proof}[1][Proof]{\begin{trivlist}
\item[\hskip \labelsep {\bfseries #1:}]}{\qed\end{trivlist}}
\newcommand{\nabb}{\mbox{$\nabla \mkern-13mu /$\,}}
\newcommand{\Dell}{\mbox{$\Delta \mkern-13mu /$\,}}
\newcommand{\gSlash}{\mbox{$g \mkern-9mu /$\,}}
\newcommand{\dVt}{d\,V^3}
\newcommand{\prp}{\partial_r\psi}
\newcommand{\Mint}{{\mathcal M}_{\text int}}
\newcounter{mnotecount}[section]
\renewcommand{\themnotecount}{\thesection.\arabic{mnotecount}}
\newcommand{\mnote}[1]%{}
{\protect{\stepcounter{mnotecount}}$^{\mbox{\footnotesize $%
\!\!\!\!\!\!\,\bullet$\themnotecount}}$ \marginpar{%\color{red}
\raggedright\tiny\em $\!\!\!\!\!\!\,\bullet$\themnotecount: #1} }
\newcommand{\Srn}{\Sigma_{r_0}}
\newcommand{\Sr}{\Sigma_{r}}
\def\XXint#1#2#3{{\setbox0=\hbox{$#1{#2#3}{\int}$ }
\vcenter{\hbox{$#2#3$ }}\kern-.6\wd0}}
\def\thesection{\arabic{section}}
\def\p@subsection{}
\begin{document}

%%%%%%%%%%%%%%%%%%%%%%%%%%%%%%%%%%%%%%%%% Header
%%%%%%%%%%%%%%%%%%%%%%%%%%%%%%%%%%%%%%%%%%%%%%%%%%%%%%%%%%%%%%%%%%%%%%%%%%%%

\begin{center}

{\bf {\Large
BOUNDED ENERGY WAVES ON THE BLACK HOLE INTERIOR OF REISSNER-NORDSTR\"OM-DE SITTER}}\\

\bigskip
Jo\~ao L. Costa\footnote{e-mail address: jlca@iscte.pt}{${}^{,\dagger}$}{${}^{,\star}$} and
Anne T. Franzen\footnote{e-mail addresses: anne.franzen@tecnico.ulisboa.pt}{${}^{,\star}$}

\bigskip
{\it {${}^\dagger$}Instituto Universit\'ario de Lisboa (ISCTE-IUL), Lisboa, Portugal.}\\
\bigskip
{\it {${}^\star$}Center for Mathematical Analysis, Geometry and Dynamical Systems,}\\
{\it Instituto Superior T\'ecnico, Universidade de Lisboa,}\\
{\it Av. Rovisco Pais, 1049-001 Lisboa, Portugal}

\end{center}
\medskip

\centerline{ABSTRACT}

\noindent
Motivated by the Strong Cosmic Censorship Conjecture, in the presence of a cosmological constant,
we consider solutions of the scalar wave equation $\Box_g\phi=0$
on fixed subextremal Reissner--Nordstr\"om--de Sitter backgrounds $(\cM, g)$, without imposing symmetry assumptions on $\phi$.
We provide a sufficient condition, in terms of surface gravities and a parameter
for an exponential decaying Price law, for a local energy of the waves to remain bounded up to the Cauchy horizon.
The energy we consider controls, in particular, regular transverse derivatives at the Cauchy horizon; this  allows
us to extend the solutions with bounded energy, to the Cauchy horizon, as functions in $C^0\cap H^1_{loc}$. Our results correspond
to another manifestation of the potential breakdown
of Strong Cosmic Censorship in the positive cosmological constant setting.

\medskip

\tableofcontents

%%%%%%%%%%%%%%%%%%%%%%%%%%%%%%%%%%%%%%%%% Intro
%%%%%%%%%%%%%%%%%%%%%%%%%%%%%%%%%%%%%%%%%%%%%%%%%%%%%%%%%%%%%%%%%%%%%%%%%%%%

\section{Introduction}

The analysis of the geometry of the black hole interior region of some exact solutions of the Einstein equations
reveals a disturbing phenomenon that puts into question the deterministic character
of General Relativity: global uniqueness, in full generality, fails for the Einstein equations!
Nonetheless, there are epistemological and physical reasons to believe that such pathological
features are unstable and that global uniqueness holds generically.  The Strong Cosmic Censorship (SCC) Conjecture  substantiates this expectation.

The issue of under which regularity constraints, for the metric, one should formulate SCC is a subtle one (see~\cite{christo_form, m_bh, joao1, chrusciel, earman}).
Here, let us just mention that after the non--linear analysis of the spherically symmetric Einstein-Maxwell-scalar field system,
by Dafermos~\cite{m_stab, m_interior}, that followed previous fundamental breakthroughs~\cite{poisson, hiscock, amos}, the expectation became that,
in the context of black hole spacetimes,
for generic initial data the metric extends, beyond the  Maximal Globally Hyperbolic
Development (MGHD)~\footnote{Informally the MGHD is the largest Lorentzian manifold $(\cM,g)$ determined, via Einstein's equations, by the initial data.
Given an extension $(\tilde \cM,\tilde g)$ of the MGHD the boundary of $\cM$ within $\tilde\cM$ is known as the Cauchy horizon ${\cal CH}$.
See~\cite{hans_cauchy}
for precise definitions.}, in $C^0$ but not in  $C^2$.
The
understanding of this fact
leads to a formulation of SCC that forbids the generic existence
of continuous extensions of the metric with squared integrable connection coefficients. This guarantees that (generically)
no extension will solve the field equations, even in a weak sense; a proof of such formulation of the conjecture
would save a version of determinism for General Relativity.

By now, there is strong evidence~\cite{m_bh, luk3} that this form of SCC holds for large classes of asymptotically flat initial data. Nonetheless, in the
cosmological setting, i.e., if we add a positive cosmological constant to the Einstein equations, the situation is quite different as a consequence of the expected ``faster" decay of gravitational perturbations, in the exterior region (see~\cite{brady1,brady2,chambers} and references therein for the original heuristic and numerical predictions).
In fact, the series~\cite{ joao1, joao2, joao3}
provides the construction of large classes of spherically symmetric solutions of the Einstein-Maxwell-scalar field system
whose MGHD can be extended, in a highly non--unique way, with continuous metric, Christoffel symbols in $L^2_{loc}$ and scalar field $\phi\in H^1_{loc}$.

In view of the importance that a positive cosmological constant has in modern cosmology~\cite{hans_top}, the situation cannot be taken lightly and further
research is in order. This paper corresponds to a first step in trying to go beyond spherical symmetry. This will be accomplished by the study of the wave equation
\bea
\lb{waveEq}
\square_{g}\phi=0\;,
\eea
on fixed subextremal Reissner--Nordstr\"om--de Sitter backgrounds $(\cM, g)$, without imposing any symmetry assumptions on  $\phi$.

The study of the (linear) wave equation as a stepping stone for the study of the Einstein equations has a long tradition in Relativity.
In such tradition
the solutions of~\eqref{waveEq} are known as {\em scalar perturbations of the metric}.
If we interpret this ``analogy'', where the scalar field $\phi$ is at the level
of the metric $g$,  in the context of SCC the results in~\cite{joao3} translate to the following expectation: {\em On Reissner--Nordstr\"om spacetimes,
with parameters sufficiently close to extremal (but still
subextremal), solutions of the wave equation, with sufficiently ``fast" decay along the event horizon, are bounded and have bounded energy; consequently they extend,
 to the Cauchy horizon, as functions in $C^{0}\cap H_{loc}^1$.}

The main result of this paper is Theorem~\ref{mainThm} that establishes a precise realization of the previous expectation. Moreover it provides a
criterion~\eqref{mainCond} for energy boundedness,
which we expect to be sharp, in terms of the surface gravities, $\kappa_-$ and $\kappa_+$, of the Cauchy and event horizons,
and a parameter $p$ parameterizing an exponential decaying Price
law upper bound (see Assumption~\ref{pricesAss}).
% For spherically symmetric Cauchy data, and in view of the recent results in~\cite{joao_pedro} that establish
% Assumption~\ref{pricesAss} for any $p<\min\{\kappa_+,\kappa_c\}$, we obtain an immediate corollary that gives
% the criterion~\eqref{mainCondSS} for energy boundedness purely in terms of surface gravities; here $\kappa_c$ is the surface gravity of the cosmological horizon.

Our results can be interpreted as establishing a degree of linear stability of the Cauchy horizon: in this view, the higher the
regularity of $\phi$, up to ${\cal CH}$, the more (linearly) stable the Cauchy horizon is.
Recall that a relevant potential instability of Cauchy horizons is associated to the blow up, in $L^2$, of (regular) transverse
derivatives;
this in turn is associated to {\em mass inflation}~\cite{m_stab, m_interior, joao3, jan1}.

We expect our results here to generalize to solutions of the wave equation on subextremal Kerr--de Sitter backgrounds; we plan to pursue this
goal in the near future (see~\cite{peter2} for related results concerning the small angular momentum parameter case).

\subsection{Comments on related results:}

In~\cite{anne}, one of the authors, showed uniform pointwise boundedness of waves, arising from compactly supported Cauchy data,
for the entire subextremal range of asymptotically flat Reissner--Nordstr\"om spacetime.
We expect that, by adapting the techniques there, one should be able to obtain the uniform bound~\eqref{mainBound} even in the case where~\eqref{mainCond} does not hold
~\footnote{Note that the proof in~\cite{anne}  requires the partition of the black hole interior into a larger number of subdomains;
the reason for this stems from the fact that the results in~\cite{anne} apply to the full subextremal range, while the results presented
here apply only to the subregion satisfying~\eqref{mainCond}.}.
Moreover, if~\eqref{mainCond} does not hold, one expects the $L^2$ norm of the transverse derivatives to blow up as one approaches the Cauchy
horizon (see discussion below).

In Chapter 4 of his PhD thesis~\cite{jan1} Sbierski showed, for waves with compact support along the event horizon, how to obtain
control, in $L^2$, over regular transverse derivatives, provided $2\kappa_+>\kappa_-$.
We can recover this result from ours
by taking the limit $p\rightarrow\infty$.
The elegant analysis in~\cite{jan1} was an important starting point for
our work here and, in fact, we lend some of its ideas.
%: most notably the use of the modified red-shift~\eqref{modRS}  and blue-shift~\eqref{modBS} vector field.
%

More precisely, since Sbierski is considering the case of a compactly supported scalar field along the event horizon he is able to apply his modified
red-shift vector field~\eqref{modRS} from the start; this allows him to obtain exponential decay, with rate governed by $\kappa_+$, along a constant radius hypersurface,
$r=r_0$, in the future of the event horizon. Since here we will not be dealing with trivial data, the energy estimate associated to~\eqref{modRS},
used by Sbierski, will not be helpful in our context. To overcome this we have devised a new strategy which we will now briefly summarize.
We start by considering the standard red-shift vector field, of Dafermos and Rodnianski, to obtain, along $r=r_0$, exponential decay, with rate
governed by $\kappa<\min\{\kappa_+,2p\}$ (see Section~\ref{secRed}). This decay rate is, nonetheless, far from optimal and manifestly insufficient to obtain the desired
stability result. In order to improve it, we use an iteration scheme that works along the following lines: first, we use the energy estimates
to obtain pointwise estimates for all relevant quantities (see Section~\ref{sectionCond}); then we use the previous pointwise estimates and Sbierski's
modified red-shift vector field to obtain new energy estimates -- to capture the decay of our horizon data we need to consider co-moving regions,
whose geometry depends on a parameter $m$ (see Section~\ref{sectionImprov}); finally, we can iteratively improve our
energy estimates by a judicious choice of a sequence of parameters $m_n$ (see Section~\ref{secIt}). In the end of this process we obtain decay,
along $r=r_0$, with rate governed by  $\kappa<\min\{\kappa_+,p\}$. Finally, by a simple adaptation of Sbierski's use of the modified blue-shift
vector~\eqref{modBS}, we can propagate these estimates all the way to the Cauchy horizon and obtain the desired control over transversal derivatives,
provided $2\min\{\kappa_+,p\}>\kappa_-$ (see Section~\ref{secProp}).

Recently, in~\cite{peter2}, Hintz and Vasy obtained, by different techniques,  $H_{loc}^{1/2+\alpha/\kappa_--\epsilon}$ regularity across the Cauchy horizon,
for solutions arising from smooth Cauchy data. There, $\alpha>0$ is bounded above by the {\em spectral gap}.
Such result is clearly strongly related to our main result here, since for {\bf generic} Cauchy data, the optimal value of $p$, in Assumption~\ref{pricesAss},
and the value of $\alpha$ are expected to be the same.
But strictly speaking neither result implies the other: to see this, note for instance that in our setting even if we make $p$ arbitrarily large
(which can always be achieved but will presumably correspond to solutions which arise from non-generic Cauchy data) the condition
$2\kappa_+>\kappa_-$ is still required to obtain bounded energy waves on the black hole interior; in fact, that $p$ arbitrarily large alone
is not a sufficient condition (for energy boundedness) is corroborated by the existence of mass inflation solutions with arbitrarily
fast decaying tails along the event horizon,
 as constructed in~\cite{m_stab, m_interior, joao3}. Moreover, it goes without saying, that  different approaches reveal different
 aspects of wave propagation on the black hole interior;
concerning our approach we highlight
as advantages: the derivation of detailed pointwise and energy estimates, the clarification of the criterion~\eqref{mainCond}
and the fact that our results also apply, with minimal changes, to asymptotically flat and anti-de Sitter
Reissner--Nordstr\"om black hole interiors (see Remark~\ref{rmkRN}).

{
Still concerning stability results, for waves on the interior of extremal black holes,
 Gajic~\cite{dejan, dejan2} established bounded energy, as well as other
higher regularity statements in restricted symmetry settings. It is somewhat amusing that the instabilities of the exterior
of extremal Reissner-Nordstr\"om spacetimes~\cite{aretakis2} are accompanied by stability of their interior regions. Opposed
to this, for subextremal  Reissner-Nordstr\"om-de Sitter spacetimes it is, in some sense, the higher stability of their
exterior regions, when compared with their asymptotically flat counterparts, that is responsible for the higher stability of their Cauchy horizons.
}

Let us now briefly turn to linear instability results. The most complete statement that we are aware of is provided by the work of Luk and Oh~\cite{luk_oh},
where it is shown that, generically, linear waves are not in $H^1_{loc}$, near the Cauchy horizon of any subextremal
(asymptotically flat) Reissner--Nordstr\"om spacetime; to the best of our knowledge, this is the first work that shows Cauchy horizon instability
without having to assume some aspect of the asymptotic profile for the waves along the event horizon. Recently, Luk and Sbierski~\cite{luk_jan},
have considered the more demanding setting of the Kerr geometry and obtained blow up of transverse derivatives in $L^2$, conditional on assuming
specific upper and lower bounds for energies of the scalar field along the event horizon; see also~\cite{DafYak} for results concerning the
existence of waves, arising from scattering data,  with infinite energy on the interior of Kerr black holes.

\bigskip

Our main result here (Theorem~\ref{mainThm}) is conditional  on assuming exponential decay, in the Eddington-Finkelstein coordinates~\eqref{edd-fin-metric},
of the energy of $\phi$ along the event horizon (Assumption~\eqref{pricesAss})~\footnote{It is this ``fast" exponential decay that is instrumental in establishing our stability result and not so much the differences in the geometries of the black hole interiors, when we change the sign of $\Lambda$. In fact, we note once again, that one can easily adapt our main result, and corresponding  proof, to the case $\Lambda\leq 0$ (see Remark~\eqref{rmkRN}).}. This form of Price's law~\footnote{Strictly speaking Price's law usually refers to both upper and lower pointwise bounds. Here we will only need the upper bound in the weaker energy form of Assumption~\eqref{pricesAss}.} has been expected, for quite some time now~\cite{brady1, brady2}, to capture
the decaying profile of waves along the event horizon of cosmological black holes. Moreover, the numerology of mass inflation suggests that this
``fast'' decay, when compared to
the polynomial decay for asymptotically flat black holes, could have a strong impact on the stability of Cauchy horizons and consequently
on SCC (see~\cite{brady2, chambers} and references therein). This discussion has been recently revived by the non-linear analysis carried
out in~\cite{joao3}.

Mostly motivated by the black hole stability problem a remarkable amount of effort has been devoted to the study of the decay of solutions to the wave equation
in the exterior of cosmological black hole spacetimes~\cite{dyatlov, BonyHafner, m_SdS, volker2}.
The end product~\cite{dyatlov} of these establishes exponential decay with rate
$\alpha$ (as above). The precise relation of the spectral gap
$\alpha$ with the black hole parameters is still unclear (see for instance the discussion
in~\cite{peter2}, [Remark 2.24]).
Moreover, establishing that a lower bound also holds generically seems to remain a wide open problem in the cosmological setting~\footnote{Recently,
in the asymptotically flat  ($\Lambda=0$) case,  a considerable amount of progress has been made concerning this issue~\cite{luk_oh,luk_oh1,luk_oh2,angel}.}.

% As we can see from the condition~\eqref{mainCond}, or the analysis of~\cite{joao4}, exponential decay per se is not enough to guarantee Cauchy horizon stability.
% Recently, in~\cite{joao_pedro}, by restricting to spherically symmetric Cauchy data, one of the authors in collaboration with Gir\~ao showed that, along the event horizon of
% Reissner-Nordstr\"om-de Sitter we have~\footnote{Note that here we are using a different $v$--coordinate from the one considered in~\cite{joao_pedro}.}
% %
% $$\left|\partial_v\phi_{|_{\cH^+}}\right|\leq C_p e^{-pv}\;,$$
% %
% for any  $p<\min\{\kappa_+,\kappa_c\}$, in agreement with the heuristics in~\cite{brady2} and the numerics of~\cite{brady1}.
% This last result together with Theorem~\ref{mainThm} give rise to Theorem~\ref{mainThmSphSym}.

\bigskip

The recent remarkable success concerning the non-linear stability of the local region of Kerr-de Sitter, by Hintz and Vasy~\cite{peter_kerrstab,peterNewman},
and the preliminary breakthrough results of Schlue concerning the non-linear stability of the cosmological region~\cite{volkerDecay},
create the expectation that some partial resolution of SCC, at least in a neighborhood of Kerr-de Sitter, might appear in a not too distant future.
Nonetheless, as we hope our work here helps to make clear, a full understanding of SCC, even in a neighborhood of Kerr-de Sitter,
will require a very precise quantitative understanding of the decay of gravitational perturbations along the event horizon of cosmological black holes.

%Not to long ago, deriving theorems establishing or disproving some form of Strong Cosmic Censorship in the full non-linear setting without any symmetry assumptions seemed completely out of reach. But with the event of Christodoulou's seminal work concerning  the formation of trapped surfaces in vacuum~\eqref{christo_form} all this as changed. In fact, in the asymptotically flat case, the strong activity concerning the study of the wave equation, with its relation with the stability of the Kerr family, together with the announced breakthrough by Dafermos and Luk regarding the non-linear stability of the Cauchy horizon in Kerr, have created the expectation that a resolution of SCC, at least in a neighborhood of Kerr, might appear in a not to distant future. Concerning the cosmological setting,

%%%%%%%%%%%%%%%%%%%%%%%%%%%%%%%%%%%%%%%% Section Setup
%%%%%%%%%%%%%%%%%%%%%%%%%%%%%%%%%%%%%%%%%%%%%%%%%%%%%%%%%%%%%%%%%%%%%%%%%%%%

\section{Setup}
\lb{setup}

\subsection{The metric and ambient differential structure in Eddington-Finkelstein coordinates}
\lb{edd-fin-metric}
To set the semantic convention, whenever we refer to the Reissner-Nordstr\"om-de Sitter solution $(\cM,g)$ we mean the maximal domain
of dependence \mbox{$\cD (\Sigma)=\cM$} of a compact Cauchy hypersurface $\Sigma=\mathbb{S}^1\times\mathbb{S}^2$.
The manifold $\cM$ can be expressed by \mbox{$\cM=\cQ\times \bbS^2$}, with $\cQ$ depicted in Figure \ref{RN_ganz}.
{\begin{figure}[ht]
\centering
\includegraphics[width=0.8\textwidth]{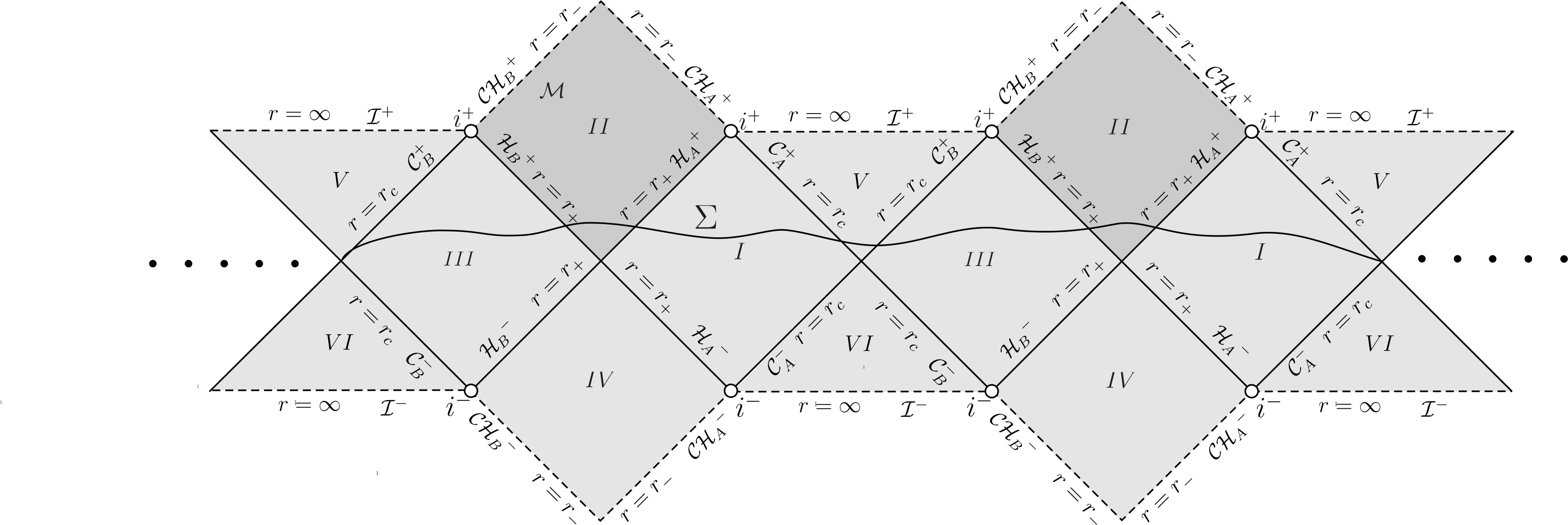}
\caption[Text der im Bilderverzeichnis auftaucht]{Conformal diagram of Reissner-Nordstr\"om-de Sitter spacetime.}
\label{RN_ganz}\end{figure}}
The region of interest for this paper is the darker shaded region $II$ (of Figure \ref{RN_ganz}); since all of its connected components are isometric
we choose one and denote it by \mbox{$\cQ_{int}$}.
We can then recover (the corresponding connected component of) the black hole interior region by \mbox{$\cM_{int}=\pi^{-1}(\cQ_{int})$},
where $\pi$ is the projection \mbox{$\pi: \cM\rightarrow \cQ$}.

 In the usual Schwarzschild coordinates $(t,r, \omega)\in\mathbb{R}\times(r_-,r_+)\times \mathbb{S}^2=\cM_{int}$ the metric takes the form
\begin{equation}
 \lb{metricTR}
g=-D(r) d t^2 +\frac{1}{D(r)} d r^2 +r^2 d\sigma^2_{\mathbb{S}^2}\;,
\end{equation}
where $d\sigma^2_{\mathbb{S}^2}$ is the round metric of the 2-sphere and the static potential is defined by
\bea
\lb{static potential}
D(r)=1-\frac{2M}{r}+\frac{e^2}{r^2}-\frac{\Lambda }{3}r^2 \;,
\eea
where the (cosmological) black hole parameters are: the mass $M>0$, the charge $e\neq 0$  and the cosmological constant $\Lambda>0$.
Note that $T=\partial_t$ is Killing and is known as the stationary vector field.

The subextremality assumption follows by simply requiring the existence of 3 positive and distinct roots
$$0<r_-<r_+<r_c$$
of the potential $D$; to see how this can be encoded in terms of the black hole parameters see~\cite{joao3}[Appendix A]. The locus $r=r_+$ is known as the
event horizon $\cH^{\pm}$
and $r=r_-$, which in the projective picture corresponds to the boundary of the Penrose diagram $\cal Q$ in $\mathbb{R}^{1+1}$,
will be referred to as the Cauchy horizon ${\cal CH}^{\pm}$, even without any mention to an extension of $(\cM,g)$.

 The following quantities, known as surface gravities~\footnote{Observe that, by definition, our surface gravities are non-negative.},
 will play a fundamental role in our work:
\begin{equation}
\label{surfGrav}
\kappa_*=\frac{1}{2}\left|D'(r_*)\right|\quad,\quad *\in\{-,+,c\}\;.
\end{equation}
Note that the subextremality condition is equivalent to having non-vanishing surface gravities.

Consider the tortoise coordinate $r^*$ determined by
\begin{equation}
 \label{r*}
\left\{
\begin{array}{l}
  \frac{dr^*}{dr}=\frac{1}{D} \\
 r^*(r_0)=0\;,
\end{array}
\right.
\end{equation}
for an $r_0\in(r_-,r_+)$ to be chosen at our later convenience. This allows us to construct
the coordinates
\begin{equation}
 \label{EFv}
\left\{
\begin{array}{l}
  v=t+r^*(r) \\
  r=r \\
  \omega=\omega\;,
\end{array}
\right.
\end{equation}
which will be specially helpful near the event horizon. In fact, in these coordinates, the metric takes the form
\bea
\lb{v_metric}
g=-D\md v^2+2\md v \md r+ r^2d\sigma^2_{\mathbb{S}^2}\;,
\eea
and extends regularly to
the manifold with boundary
$$\cM_{int}\cup{\cal H}_A^+\cup{\cal CH}^+_B,$$
see Figures~\ref{RN_ganz} and~\ref{range}, with $-\partial_r$ future directed and transverse  to  $\cH_A^+$ and $T=\partial_v$.

Near the Cauchy horizon it will be convenient to  consider instead the coordinates
\begin{equation}
 \label{EFu}
\left\{
\begin{array}{l}
  u=t-r^*(r) \\
  \hat r=r \\
  \omega=\omega\;,
\end{array}
\right.
\end{equation}
under which the metric takes the form
\bea
\lb{u_metric}
g=-D\md u^2-2\md u \md \hat r+ \hat r^2d\sigma^2_{\mathbb{S}^2}\;.
\eea
It is essential to note that in these coordinates the metric extends regularly to the manifold with boundary
$$\cM_{int}\cup{\cal CH}_A^+\cup{\cal H}_B^+,$$
with $-\partial_{\hat r}$ future directed and transverse to  ${\cal CH}_A^+$. Observe also that the orientation of $u$ is contrary to the
spacetime orientation and as a consequence $\cH_A^+\supset\{u=+\infty\}$.

The extra care in giving different notations for the radial coordinate in each set of coordinates is to protect us from the
{\em fundamental confusion of calculus} which can be specially pernicious in our setting where it is essential to distinguish between
transverse and tangential directions at the Cauchy horizon.

Integrating~\eqref{r*} gives
$$r^*=\frac{1}{2\kappa_+}\ln{|{r-r_+}|}-\frac{1}{2\kappa_-}\ln{|{r-r_-}|}-\frac{1}{2\kappa_c}\ln{|{r-r_c}|}+\frac{1}{2\kappa_n}\ln{|{r-r_n}|}+C\;,$$
where $r_n$ is the negative root of $D$. So, since
\begin{equation}
\label{vur}
 v-u=2r^*\;,
\end{equation}
we see that near $r_+$ we have the estimate
\begin{equation}
\label{r-r}
r_+-r(u,v) = e^{O(1)} e^{\kappa_+(v-u)}\,.
\end{equation}
We also take the chance to use~\eqref{vur} and define
\begin{equation}
%\label{vur}
 v_r(u):=u+2r^*(r)\;,
\end{equation}
and
\begin{equation}
%\label{vur}
u_r(v):=v-2r^*(r)\;.
\end{equation}
Following~\cite{jan1}, we will use the following notations for the level sets of the coordinate functions $u,v$ and $r$:
\begin{eqnarray*}
\label{levelSets}
\cC_{v_1}=\{v=v_1\}\;, \\
{\cal C}_{u_1}=\{u=u_1\}\;, \\
\Sigma_{r_1}=\{r=r_1\} \;.
\end{eqnarray*}
Given $S\subset \cM_{int}$, a function $x:\cM_{int}\rightarrow \mathbb{R}$, we will also write
$$S(x_1,x_2):=\{p\in\cM_{int} \;|\; p\in S\text{ and } x_1\leq x(p)\leq x_2\}\;;$$
for instance with the previous notations
$$\cC_{v}(u_1,u_2)=\{p\in\cM_{int} \;|\; v(p)=v_1\text{ and } u_1\leq u(p)\leq u_2\}\;.$$

 {\begin{figure}[ht]
\centering
\includegraphics[width=0.3\textwidth]{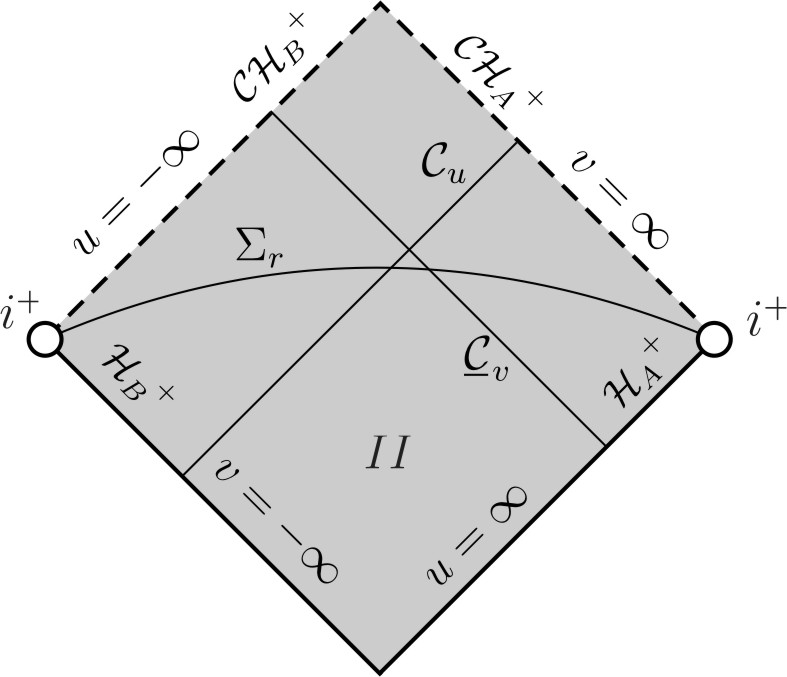}
\caption[Text der im Bilderverzeichnis auftaucht]{Conformal diagram of $\cQ_{int}=\pi(\cM_{int})$
with the ranges of $u$ and $v$, and level sets of $u,v$ and $r$ depicted.}
\label{range}
\end{figure}}

\subsection{Energy currents and vector fields}
\lb{en_cur}

To set notation and to try to make this paper as self--contained as possible, we will briefly review some of the basics of the vector field method;
for a more detailed discussion see~\cite{sergiu,m_lec} and  references therein.

Define the stress-energy tensor of a massless scalar field by
\ben
\lb{energymomentum}
T_{\mu\nu}[\phi]=\partial_\mu\phi\partial_\nu\phi-\frac12g_{\mu\nu}g^{\alpha\beta}\partial_{\alpha}\phi\;\partial_\beta\phi\;.
\een
If $\phi$ is a solution to the wave equation~\eqref{waveEq} then
we obtain the energy-momentum conservation law
\begin{equation}
\label{divfree}
\nabla^\mu T_{\mu\nu}=0\;.
\end{equation}
By contracting the energy-momentum tensor with a vector field $V$, we define the current
\be
\lb{J}
J_\mu^V[\phi]:= T_{\mu\nu}[\phi] V^\nu.
\ee
In this context we call $V$ a multiplier vector field.

If $V$ and $n$ are both causal and future directed we have
\begin{equation}
 \label{signCurrent}
 J^V_\mu [\phi]n^\mu\geq 0\;.
\end{equation}

In the vector field method the basic energy estimates follow by applying Stokes' Theorem to the divergence of currents.
By~\eqref{energymomentum} it follows that
\bea
\nabla^{\mu}J^V_{\mu}[\phi]=K^V[\phi]\;,
\eea
for
\be
\lb{K}
K^V[\phi]:= (\pi^V)^{\mu\nu}T_{\mu\nu}[\phi]\;,
\ee
where $(\pi^V)^{\mu\nu} :=\frac{1}{2} (\cL_V g)^{\mu\nu}$ is the deformation tensor of $V$.
With these definitions it is immediately clear that $\nabla^\mu J^V_\mu[\phi] =0$, if $V$ is Killing.

In this language, including the notation~\eqref{levelSets}, Stokes' Theorem applied to the divergence of the current
$J^V_{\mu}[\phi]$, in the region depicted in Figure~\ref{start}
{\begin{figure}[ht]
\centering
\includegraphics[width=0.4\textwidth]{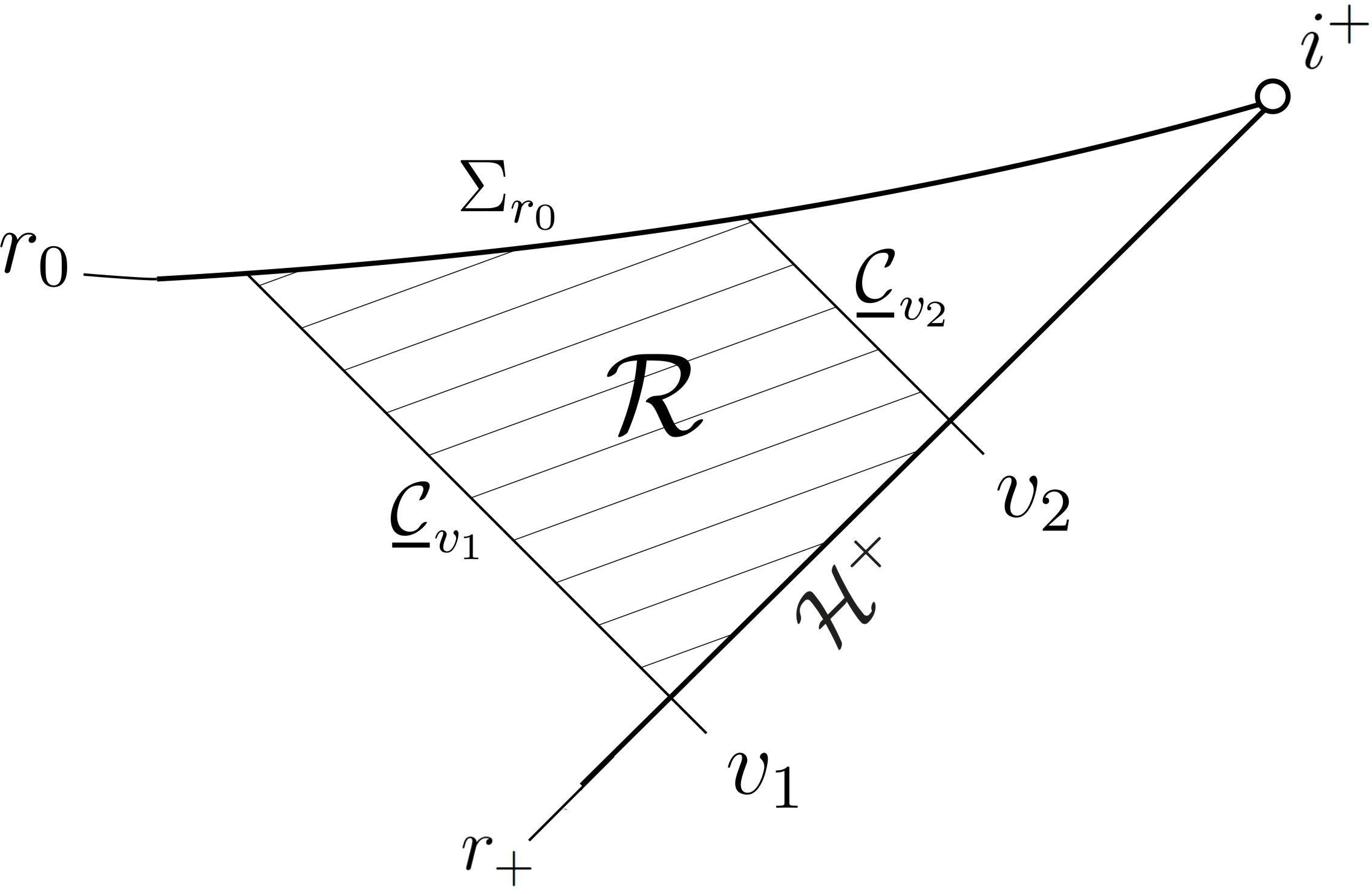}
\caption[]{Region to apply divergence theorem represented as the hatched area.}
\label{start}\end{figure}}
provides the energy identity
\begin{eqnarray}
\label{enEst1}
\nonumber
\int\limits_ {{\cC}_{v_2}(r_0,r_+)} J_{\mu}^V[\phi] n^{\mu}_{\cC_{v}}\dVt_{\cC_{v}}
&+&
\int\limits_ {\Sigma_{r_0}(v_1,v_2)} J_{\mu}^V[\phi] n^{\mu}_{\Sigma_{r_0}} \dVt_{\Sigma_{r_0}}+\int\limits_ {\cR} K^V[\phi] \dV
\\
&=&
\int\limits_ {{\cC}_{v_1}(r_0,r_+)} J_{\mu}^V[\phi] n^{\mu}_{\cC_{v}} \dVt_{\cC_{v}} +
\int\limits_{\cH^+(v_1,v_2)}J_{\mu}^V[\phi]n^{\mu}_{\cH^+}\dVt_{\cH^+}\;.
\end{eqnarray}

To use the previous, some clarifications are still required. In the coordinates~\eqref{EFv}:
\begin{itemize}
 \item The volume element associated to $g$ is
 $$\dV = r^2dv dr d\omega\;,$$
 where, from now on $d\omega$ is the volume form of the round 2-sphere.
 \item On each $\Sr$ we have the future directed unit normal
$$n_{\Sr}=\frac{1}{\sqrt{-D}}\left( \partial_v+D\partial_r\right)\;,$$
and the induced volume form
$$\dVt_{\Sr}=n^{\mu}_{\Sr}\neg\dV= r^2\sqrt{-D} dv  d\omega\;.$$
\item On the null components
of $\partial\cR$, say $\cC_v$, there is no natural choice of normal or volume form, so one can just choose $n_{\cC_{v}}$ to be any
future directed vector orthogonal to $\cC_v$  and then $\dVt_{\cC_{v}}$ is completely determined by Stokes' Theorem; for instance for the choice
$$n^{\mu}_{\cC_{v}}=-\partial_r\;,$$
corresponds
$$\dVt_{\cC_{v}}=r^2dr d\omega\;.$$
\end{itemize}

\bigskip

A specially important vector field for the analysis of waves in black hole spacetimes is the red-shift vector field $N$ discovered
by Dafermos and Rodnianski~\cite{m_red}, which captures the decay generated by the red-shift effect near the event horizon.
In the present paper we will need the following particular case of such construction, whose proof can be found in~\cite{stef_lec}:
\begin{thm}
\label{redshiftThm}
Let $\phi$ be a solution of the wave equation. Given $\epsilon>0$, there exists $r_0\in(r_-,r_+)$ and a future directed vector field
$N=N^r(r)\partial_r+N^v(r)\partial_v$  such that, in $r_0\leq r\leq r_+$, the following estimate holds
\begin{equation}
\label{redEst}
 K^N[\phi]\geq (\kappa_+-\epsilon) J_{\mu}^N[\phi] n^{\mu}_{\cC_{v}}\;.
\end{equation}
\end{thm}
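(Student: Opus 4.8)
<br>

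The plan is to construct $N$ explicitly as a modification of the stationary vector field $T=\partial_v$ near the event horizon, following the red-shift construction of Dafermos and Rodnianski. The key algebraic input is that, in the coordinates~\eqref{EFv}, the surface $\cH^+=\{r=r_+\}$ is a non-degenerate Killing horizon of $T$ with positive surface gravity $\kappa_+=\frac12|D'(r_+)|$, and the quantity $-D$ vanishes to first order at $r_+$ with $D'(r_+)>0$ (since $D<0$ on $(r_-,r_+)$ and $D=0$ at the endpoints). First I would take the ansatz $N=N^r(r)\partial_r+N^v(r)\partial_v$ with $N^v>0$ and $N^r<0$ so that $N$ is future directed and transverse (indeed timelike) near $\cH^+$, normalizing say $N^v(r_+)=1$ and $N^r(r_+)=-c$ for a constant $c>0$ to be chosen. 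The point is to compute the deformation tensor $\pi^N=\frac12\cL_N g$ and hence $K^N[\phi]=(\pi^N)^{\mu\nu}T_{\mu\nu}[\phi]$ on $\cH^+$ itself.

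The second step is the computation on the horizon. On $\cH^+$, because $T$ is Killing, only the modification $N-T$ contributes to $\pi^N$, and the dangerous/leading contribution comes from the transverse derivative term $(\partial_r\phi)^2$ in $T_{\mu\nu}[\phi]$. A direct computation (this is the standard red-shift calculation; I would not grind through it) shows that on $\cH^+$ one has $K^N[\phi] = \kappa_+\, c\,(\partial_r\phi)^2 + (\text{terms controlled by the remaining components of } J^N_\mu[\phi]n^\mu_{\cC_v})$, with the crucial \emph{sign} of the leading coefficient being positive precisely because $D'(r_+)>0$ — this is the red-shift. Meanwhile $J^N_\mu[\phi]n^\mu_{\cC_v}$, with $n_{\cC_v}=-\partial_r$, is a positive-definite quadratic form in $(\partial_r\phi, \nabb\phi, \partial_v\phi)$ whose coefficient of $(\partial_r\phi)^2$ on $\cH^+$ is $N^v(r_+)=1$, so on the horizon the bound $K^N[\phi]\ge \kappa_+ J^N_\mu[\phi]n^\mu_{\cC_v}$ holds after choosing $c$ large enough to dominate the cross terms and the angular terms by Cauchy--Schwarz. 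I would phrase this as: choose $c$ so that the symmetric matrix $K^N - \kappa_+ (J^N\cdot n_{\cC_v})$, viewed as a quadratic form in the first derivatives of $\phi$, is positive semidefinite at $r=r_+$.

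The third step is a continuity/openness argument to pass from the horizon to a slab $r_0\le r\le r_+$. Both $K^N[\phi]$ and $J^N_\mu[\phi]n^\mu_{\cC_v}$ are, pointwise, continuous quadratic forms in the first derivatives of $\phi$ with coefficients that are smooth functions of $r$ (and the chosen profile $N^r(r),N^v(r)$). Since at $r=r_+$ the form $K^N-\kappa_+(J^N\cdot n_{\cC_v})$ is positive definite on the relevant subspace (or at worst we relax $\kappa_+$ to $\kappa_+-\epsilon$ to get strict positivity with room to spare), the set of $r$ where $K^N[\phi]\ge(\kappa_+-\epsilon)J^N_\mu[\phi]n^\mu_{\cC_v}$ for all $\phi$ contains a neighborhood of $r_+$; we then simply \emph{define} $r_0\in(r_-,r_+)$ to lie in this neighborhood, which is exactly the freedom granted in the statement (and compatible with choosing $r_0$ later for other purposes, shrinking the slab further if needed). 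That $N$ remains future directed on all of $[r_0,r_+]$ is arranged by keeping $N^v>0$, $N^r<0$ on the slab, which holds after shrinking.

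The main obstacle — really the only subtle point — is the sign and size bookkeeping in the horizon computation: one must check that the $(\partial_r\phi)^2$ coefficient of $K^N[\phi]$ on $\cH^+$ is genuinely positive (this is where subextremality, $\kappa_+>0$, enters), and that the cross terms coupling $\partial_r\phi$ to the tangential derivatives $\partial_v\phi$ and $\nabb\phi$ (which have no favorable sign) can be absorbed using Cauchy--Schwarz at the cost of enlarging $c=|N^r(r_+)|$, without destroying the positivity of the $(\partial_r\phi)^2$ term relative to $\kappa_+-\epsilon$ times the energy density. Everything else is routine: the existence of the profile functions $N^r,N^v$, the smoothness of the coefficients, and the continuity argument extending the inequality off the horizon. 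Since the excerpt explicitly says the proof ``can be found in~\cite{stef_lec}'', I would in practice cite that reference for the detailed computation and only sketch the structure above.
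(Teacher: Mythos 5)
The paper gives no proof of this theorem beyond citing~\cite{stef_lec}, and your sketch is precisely the standard Dafermos--Rodnianski red-shift construction that that reference carries out, so in structure you match the intended argument: positivity of the $(\partial_r\phi)^2$ coefficient of $K^N$ on the horizon coming from $D'(r_+)=2\kappa_+>0$, absorption of cross terms using the freedom in the profile of $N$, and a continuity argument to pass to a slab $r_0\leq r\leq r_+$. One bookkeeping slip, though, in exactly the place you flag as the only subtle point: with the paper's normalization $n_{\cC_{v}}=-\partial_r$, the coefficient of $(\partial_r\phi)^2$ in $J^N_\mu n^{\mu}_{\cC_{v}}$ at $r=r_+$ is $-N^r(r_+)=c$, not $N^v(r_+)=1$, while the corresponding coefficient in $K^N$ is $\kappa_+ c$; hence the ratio of the leading terms is exactly $\kappa_+$ independently of $c$, and the dangerous cross term $(\pi^N)^{AB}T_{AB}\sim -\tfrac{2N^r}{r}\,\partial_v\phi\,\partial_r\phi$ itself scales like $c$, so enlarging $c$ does not, by itself, absorb anything. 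The absorption instead uses the fact that $J^N_\mu n^{\mu}_{\cC_{v}}$ contains no $(\partial_v\phi)^2$ (the normal is the null generator of $\cC_v$), so one takes $(N^v)'(r_+)$ large to dominate the $\partial_v\phi$-content of the cross term and concedes the $\epsilon$ in $\kappa_+-\epsilon$ on the $(\partial_r\phi)^2$ term; your closing reformulation --- choose the profile so that $K^N-(\kappa_+-\epsilon)J^N_\mu n^{\mu}_{\cC_{v}}$ is a strictly positive quadratic form in the first derivatives at $r=r_+$ and then propagate by continuity of the coefficients --- is the correct statement and is how the proof actually closes.
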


%%%%%%%%%%%%%%%%%%%%%%%%%%%%%%%%%%%
%%%%%%%%%%%%%%%%%%%%%%%%%%%%%%%%%%%%%5

\subsection{Killing vector fields}
\lb{angular}

In the context of the vector field method, Killing vector fields are specially useful not just as multipliers,
as we already saw, but also as commutators: for a Killing vector field $W$ we have the commutation relation
$[\Box_g, W]=0$, so if $\phi$ is a solution of the wave equation, then $W\phi$ is also a solution.

On Reissner-Nordstr\"om besides the stationary vector field $T$ we also have the Killing vectors provided by
the generators of spherical symmetry $\leo_{i}$, $i=1,2,3$. These satisfy the important relations~\cite{anne}
\begin{equation}
\label{roundGrad}
 |\nabb \phi|^2=\frac{1}{r^2}\sum_{i=1}^{3} \left(\leo_i \phi\right)^2\;,
\end{equation}
and
\begin{equation}
\label{roundLap}
 \left(\Dell \phi \right)^2=\frac{1}{r^4}\sum_{i=1}^{3} \sum_{j=1}^{3}\left(\leo_i \leo_j\phi\right)^2\;,
\end{equation}
where $\nabb$ and $\Dell$ are the gradient and Laplacian of the metric $\gSlash=r^2d\sigma^2_{\mathbb{S}^2}$\;.

\subsection{Price's Law}
\lb{secPrice}

We are now able to formalize our main assumption:

\begin{ass}[Price's Law]
\lb{pricesAss}
We will say that a function $\phi\in C^{\infty}(\Mint\cup \cH^+)$ satisfies Price's Law,
on Reissner--Nordstr\"om--de Sitter, provided there exists $p>0$ such that for any function of the form
\bea
\label{psi}
\psi=T^l\leo^I\phi\;,
\eea
constructed out of any $l\in\mathbb{N}_0$ and $I=(i_1,...,i_m)$, with $m\in\mathbb{N}_0$ and $i_s\in\{1,2,3\}$,
we have, in $\cH^+=\{r=r_+\}$,
covered by the Eddington-Finkelstein coordinates~\eqref{EFv},
\bea
\lb{Price}
\int_{v_1}^{v_2}\int_{\bbS^2}\left[ \left(\partial_v\psi\right)^2
+\left|{\nabb}\psi\right|^2\right](v,r_+,\omega)d\omega dv
&\lesssim_{l,I}& e^{-2p v_1}\;,
\eea
for all $0\leq v_1\leq v_2$.
\end{ass}

%%%%%%%%%%%%%%%%%%%%%%%
%%%%%%%%%%%%%%%%%%%%%%%

%%%%%%%%%%%%%%%%%%%%%%%%%%%%%%%%%%%%%%%%% Section Main result
%%%%%%%%%%%%%%%%%%%%%%%%%%%%%%%%%%%%%%%%%%%%%%%%%%%%%%%%%%%%%%%%%%%%%%%%%%%%

\section{The main result}
\lb{mainresult}

The main result of this paper, the proof of which will be given in Section~\ref{secProof}, can be stated as follows:
\begin{thm}
\lb{mainThm}
Let $(\Mint, g)$ be a black hole interior region of a subextremal Reissner--Nordstr\"om--de Sitter spacetime.
Let $\phi\in C^{\infty}(\Mint\cup \cH^+)$ be a solution of the wave equation~\eqref{waveEq} satisfying
Price's Law (Assumption~\ref{pricesAss}). If (recall~\eqref{surfGrav})
\begin{equation}
\label{mainCond}
2\min\{p,\kappa_+\}>\kappa_-\;,
\end{equation}
then for any timelike and future directed vector field $N\in {\cal X}^{\infty}(\Mint\cup {\cal CH}^+)$ that commutes
with the stationary Killing vector field $T$,  there exists a constant $C>0$ such that, for any
$r_-\leq r_2\leq r_1\leq r_+$ and any $0\leq u_1\leq u_2$, we have (compare with~\eqref{mainEstCoords})
\begin{equation}
\label{mainEst}
\int\limits_ {{\cal C}_{u_1}(r_2,r_1)}  J_{\mu}^{ N}[\phi] n^{\mu}_{{\cal C}_u}\dVt_{{\cal C}_u}
+ \int\limits_{\Sigma_{r_2}(u_1,u_2)}  J_{\mu}^{ N}[\phi] n^{\mu}_{\Sigma_r}\dVt_{\Sr}
\leq C\;,
\end{equation}
and, in $\Mint$,
\begin{equation}
\label{mainBound}
\left|\phi\right|
\leq C\;.
\end{equation}
Moreover, $\phi$ extends as a function to $C^{0}(\Mint\cup\cH^+ \cup{\cal CH}^+)\cap H_{loc}^1(\Mint\cup\cH^+\cup{\cal CH}^+)$.
\end{thm}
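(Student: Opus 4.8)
The plan is to propagate estimates from the event horizon $\cH^+$ to the Cauchy horizon $\cCH^+$ in three stages, following the strategy sketched in the introduction but dualizing to the variable $u$ near $\cCH^+$. First I would work in the near-event-horizon region $\{r_0\le r\le r_+\}$ and use the standard Dafermos--Rodnianski red-shift vector field from Theorem~\ref{redshiftThm}: applying the energy identity~\eqref{enEst1} with $V=N$, the bulk term $K^N[\phi]$ has a good sign away from the horizon and dominates the boundary flux with a favourable rate, so combined with Price's Law (Assumption~\ref{pricesAss}) one gets exponential decay of the $N$-energy along $\Sigma_{r_0}$, with some rate $\kappa<\min\{\kappa_+,2p\}$. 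This first rate is not optimal, so next I would feed the resulting flux bounds into the Sobolev-type pointwise estimates (the role of Section~\ref{sectionCond}) to control $\phi$, its derivatives, and the commuted quantities $T^\ell\leo^I\phi$ pointwise, then re-run energy estimates using Sbierski's modified red-shift field on suitable co-moving regions whose geometry depends on a parameter $m$, and iterate over a cleverly chosen sequence $m_n$ to upgrade the decay rate along $\Sigma_{r_0}$ all the way up to any $\kappa<\min\{\kappa_+,p\}$.

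With sharp exponential decay in hand along $\Sigma_{r_0}$, the third stage is to cross the interior toward $\cCH^+$ using the modified blue-shift vector field. Here I would switch to the $(u,\hat r,\omega)$ chart~\eqref{EFu}, in which $-\partial_{\hat r}$ is transverse to $\cCH^+_A$, and apply the energy identity between $\Sigma_{r_0}$ and a hypersurface $\Sigma_{r_2}$ with $r_2$ close to $r_-$, together with a null cone $\cC_{u_1}$. The modified blue-shift field is designed so that its bulk term $K^N[\phi]$ has the right sign precisely when $2\kappa_->$ (the decay rate one is propagating), i.e. the estimate closes under $2\min\{\kappa_+,p\}>\kappa_-$, which is exactly~\eqref{mainCond}. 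Grönwall in $r$ then yields the uniform bound~\eqref{mainEst} on $J^N_\mu[\phi]n^\mu$ over $\cC_{u_1}(r_2,r_1)$ and over $\Sigma_{r_2}(u_1,u_2)$, uniformly in $r_2\ge r_-$ and in $u_2\ge u_1\ge 0$.

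From the uniform energy bound the remaining conclusions are comparatively soft. For~\eqref{mainBound}, I would commute with $T$ and the angular Killing fields $\leo_i$ (using $[\Box_g,\leo_i]=0$ and~\eqref{roundGrad}), apply the same three-stage argument to each $\psi=T^\ell\leo^I\phi$ to get uniform energy bounds for all of them, and then use the Sobolev embedding on $\bbS^2$ together with a fundamental-theorem-of-calculus integration in $r$ from $\Sigma_{r_0}$ (where $\phi$ is already controlled via the exterior/red-shift estimates) to bound $|\phi|$ pointwise in $\Mint$. For the extension statement: the bound on $\int_{\cC_{u_1}} J^N_\mu[\phi]n^\mu\dVt$ controls $\int (\partial_{\hat r}\phi)^2 + |\nabb\phi|^2$ over each outgoing cone, so $\phi(u,\cdot)$ restricted to a cone is Cauchy in an $H^1$-type norm as $r\to r_-$; since the transverse derivative $\partial_{\hat r}\phi$ is in $L^2$ locally, $\phi$ has a well-defined trace on $\cCH^+$, is continuous up to and across it, and lies in $H^1_{loc}(\Mint\cup\cH^+\cup\cCH^+)$. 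Matching continuity at $\cH^+$ comes from the $(v,r,\omega)$ chart and the exterior input.

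The main obstacle I expect is the iteration in the second stage: getting from the crude rate $\kappa<\min\{\kappa_+,2p\}$ to the sharp $\kappa<\min\{\kappa_+,p\}$ requires carefully tracking how the error terms in the modified red-shift energy estimate on the $m_n$-dependent co-moving regions depend on $m_n$, and choosing the sequence $m_n$ so that the losses telescope rather than accumulate; a naive choice loses an $\epsilon$ at each step and stalls below the sharp rate. The blue-shift propagation in stage three is delicate too, since the bulk sign condition is exactly borderline at~\eqref{mainCond}, so one must keep explicit constants and cannot afford any slack; but conceptually it is a direct adaptation of Sbierski's argument, whereas the iteration scheme is the genuinely new ingredient.
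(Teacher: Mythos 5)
Your proposal follows essentially the same three-stage route as the paper: crude red-shift decay at rate $\kappa<\min\{\kappa_+,2p\}$ along $\Sigma_{r_0}$, the pointwise-estimate/modified-red-shift iteration over co-moving regions parametrized by $m_n$ to reach the sharp rate, and propagation to ${\cal CH}^+$ via the weighted blue-shift field $\check N_\delta$ under condition~\eqref{mainCond}, with boundedness and the $C^0\cap H^1_{loc}$ extension obtained by commutation, Sobolev embedding on $\bbS^2$ and integration in $r$. The only imprecision is in stage three: in the paper the bulk term $K^{\check N_\delta}[\psi]\geq 0$ holds near $r_-$ for any $\delta>0$ independently of the decay rate, and~\eqref{mainCond} enters instead to make the $e^{(1+\delta)\kappa_- u}$-weighted flux on the initial hypersurface $\Sigma_{r_1}(u_1,\infty)$ finite (via Lemma~\ref{lemExpDecay}), not to fix the sign of the bulk term --- but this does not change where the argument closes.
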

\begin{rmk}
It is instructive to notice that in the Eddington-Finkelstein coordinates~\eqref{EFu},
which recall are regular up to and including ${\cal CH}^+$ with $\partial_{\hat r}$ transverse to the Cauchy horizon, the estimate~\eqref{mainEst} becomes
\begin{equation}
\label{mainEstCoords}
\int_{r_2}^{r_1}\int_{\mathbb{S}^2}
\left[\left(\partial_{\hat r}\phi\right)^2+\left|{\nabb}\phi\right|^2\right](u_1,r,\omega)d\omega dr
+
\int_{u_1}^{u_2}\int_{\mathbb{S}^2}
\left[\left(\partial_u\phi\right)^2+(-D)\left(\partial_{\hat r}\phi\right)^2+\left|{\nabb}\phi\right|^2\right](u,r_2,\omega)d\omega du
\leq C\;.
\end{equation}
\end{rmk}

\begin{rmk}
In fact, if we restrict to the region $r_-\leq r_2\leq r_1\leq \check r_1$, for a $\check r_1$ sufficiently close to $r_-$,
we get some decaying (in $u$) statements: see~\eqref{lastEnergyEst},~\eqref{finalDecay1} and~\eqref{finalDecay2}.
\end{rmk}

\begin{rmk}
The statement for $r_2=r_-$ should be interpreted as referring to
$$\int\limits_{\Sigma_{r_-}(u_1,u_2)}  J_{\mu}^{ N}[\phi] n^{\mu}_{\Sigma_r}\dVt_{\Sr}:=
\limsup_{r\rightarrow r_- } \int\limits_{\Sigma_{r}(u_1,u_2)}  J_{\mu}^{ N}[\phi] n^{\mu}_{\Sigma_r}\dVt_{\Sr}\;.$$
\end{rmk}

\begin{rmk}
\label{rmkRN}
If one keeps Assumption~\ref{pricesAss} as it is stated here (with the same exponential decay in those coordinates), then Theorem~\ref{mainThm}
will also hold on the black hole interior of subextremal (asymptotically flat) Reissner--Nordstr\"om or Reissner--Nordstr\"om--adS spacetimes.
In fact the proof presented here will go through with minimal changes. Note also that the reason to present the result in the de Sitter context is that,
for generic Cauchy data (plus appropriate boundary data if $\Lambda<0$),
Assumption~\ref{pricesAss}  is not expected to hold if $\Lambda\leq0$.
\end{rmk}

\begin{rmk}
%\label{rmkRN}
We expect our main condition~\eqref{mainCond} to be sharp. More precisely, it is expected, for instance in view of the numerology
of mass inflation~\cite{brady2,joao3}, that all solutions of the wave equation that, instead of Assumption~\ref{pricesAss}, satisfy the stronger condition
\bea
\lb{Price}
\int_{v_1}^{v_2}\int_{\bbS^2}\left[ \left(\partial_v\psi\right)^2
+\left|{\nabb}\psi\right|^2\right](v,r_+,\omega)d\omega dv
&\sim_{l,I}& e^{-2p v_1}\;,
\eea
will have unbounded $H^1$ norm near the Cauchy horizon, provided~\eqref{mainCond} does not hold.
\end{rmk}

% By invoking~\cite{joao_pedro}, we know that, for solutions of~\eqref{waveEq} with smooth and spherically symmetric Cauchy data, Price's
% Law holds for any $p<\min\{\kappa_+,\kappa_c\}$. Consequently, under spherical symmetry, we have a more complete version of the previous theorem:
%
% \begin{thm}
% \lb{mainThmSphSym}
% Let $\Sigma$ be a Cauchy surface, with unit normal $n_\Sigma$, of a given subextremal Reissner-Nordstr\"om-de Sitter spacetime. Let $\phi$ be the solution
% of the wave equation~\eqref{waveEq} with spherically symmetric initial data $\phi_{|\Sigma}\in C^{\infty}(\Sigma)$
% and $n_{\Sigma}\phi_{|\Sigma}\in C^{\infty}(\Sigma)$. Then, if
% %
% \begin{equation}
% \label{mainCondSS}
% 2\min\{\kappa_+,\kappa_c\}>\kappa_-\;,
% \end{equation}
% %
% the conclusions of Theorem~\ref{mainThm} follow.
% \end{thm}

%%%%%%%%%%%%%%%%%%%%%%%%%%%
%%%%%%%%%%%%%%%%%%%%%%%%%%%%

%%%%%%%%%%%%%%%%%%%%%%%%%%%%%%%%%%%%%%%%% Perliminary Energy Estimates
%%%%%%%%%%%%%%%%%%%%%%%%%%%%%%%%%%%%%%%%%%%%%%%%%%%%%%%%%%%%%%%%%%%%%%%%%%%%

\section{Proof of the Theorem~\ref{mainThm}}
\label{secProof}

\subsection{Red-shift estimates}
\label{secRed}

We start by exploring the red-shift vector field:
\begin{prop}
\label{propDecay0}
Let $\phi$ be a solution of the wave equation~\eqref{waveEq} that satisfies Price's Law (Assumption~\ref{pricesAss}) and let $N$ be the red-shift vector
field constructed in Theorem~\ref{redshiftThm}.
Then, for $\kappa<\min\{\kappa_+,2p\}$ and any function of the form~\eqref{psi} there exists $r_0 \in (r_{-}, r_+)$, such that, for all $r\in[r_0,r_+]$,
$\omega\in\bbS^2$ and $0\leq v_1 \leq v_2$ the following estimates hold
\bea
\lb{decayC}
\int\limits_ {{\cC}_{v_1}(r_0,r_+)} J_{\mu}^N[\psi] n^{\mu}_{\cC_{v}} \dVt_{\cC_{v}}
\sim
\int_{r_0}^{r_+}\int_{\bbS^2}
\left[\left(\partial_r\psi\right)^2+\left|{\nabb}\psi\right|^2\right](v_1,r,\omega) d\omega dr
\lesssim_{l,I} e^{-\kappa {v_1}}\; ,
\eea
\bea
\lb{decayS}
\int\limits_ {\Sigma_r(v_1,v_2)} J_{\mu}^N[\psi] n^{\mu}_{\Sigma_r} \dVt_{\Sigma_r}
\sim
\int_{v_1}^{v_2}\int_{\bbS^2}
\left[\left(\partial_v\psi\right)^2+(-D)\left(\partial_r\psi\right)^2+\left|{\nabb}\psi\right|^2\right](v,r,\omega)d\omega dv
\lesssim_{l,I} e^{-\kappa {v_1}}\; ,
\eea
\bea
\lb{decayR}
\int\limits_ {\cR(v_1,v_2)} K^N[\psi] \dV
\sim
\int_{v_1}^{v_2}\int_{r_0}^{r_+}\int_{\bbS^2}
\left[\left(\partial_v\psi\right)^2+\left(\partial_r\psi\right)^2+\left|{\nabb}\psi\right|^2\right](v,r,\omega)d\omega dr dv
\lesssim_{l,I} e^{-\kappa {v_1}}\;.
\eea
\end{prop}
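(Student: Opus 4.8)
The plan is to run a red-shift energy estimate on the region $\cR(v_1,v_2)$ bounded by $\cC_{v_1}$, $\cC_{v_2}$, $\Sigma_{r_0}$ and a piece of the event horizon $\cH^+(v_1,v_2)$, and then integrate the resulting differential inequality in $v_1$ using Price's Law as the horizon contribution. First, I would note that for $\psi=T^l\leo^I\phi$, commutation $[\Box_g,T]=[\Box_g,\leo_i]=0$ guarantees $\psi$ is again a solution, and that the horizon flux bound~\eqref{Price} holds for $\psi$ in place of $\phi$ (this is precisely the form of Assumption~\ref{pricesAss}). The $\sim$-equivalences on the left of~\eqref{decayC}--\eqref{decayR} are purely algebraic: one expands $J^N_\mu[\psi]n^\mu$ and $K^N[\psi]$ in the Eddington--Finkelstein coordinates~\eqref{EFv} using the stress-energy tensor~\eqref{energymomentum}, the explicit null/tangential decomposition, and the identities~\eqref{roundGrad}, plus the fact that $N=N^r\partial_r+N^v\partial_v$ has coefficients bounded above and below on the compact interval $[r_0,r_+]$; since $N$ is strictly timelike there the quadratic form $J^N_\mu[\psi]n^\mu$ is coercive, giving the stated equivalences with the integrals of $(\partial_v\psi)^2+(\partial_r\psi)^2+|\nabb\psi|^2$ (with the $(-D)$ weight appearing on $\Sigma_r$ because $-D\to 0$ there is spacelike-degenerate only in the limit $r\to r_+$, and on $\Sr$ for fixed $r<r_+$ it is a genuine positive weight).

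Next, the core step: apply the energy identity~\eqref{enEst1} (with $v_2$ replaced by a variable $v$, and $r_0$ fixed as in Theorem~\ref{redshiftThm} for the chosen $\epsilon$ with $\kappa_+-\epsilon>\kappa$) to the current $J^N_\mu[\psi]$. Using the red-shift estimate~\eqref{redEst}, the bulk term $\int_\cR K^N[\psi]\,\dV$ is bounded below by $(\kappa_+-\epsilon)\int_{v_1}^{v}\big(\int_{\cC_{v'}(r_0,r_+)}J^N_\mu[\psi]n^\mu_{\cC_v}\dVt\big)dv'$, and the $\Sigma_{r_0}$ term on the left-hand side has a good sign by~\eqref{signCurrent}. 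Writing $F(v):=\int_{\cC_v(r_0,r_+)}J^N_\mu[\psi]n^\mu_{\cC_v}\dVt$, the identity yields $F(v)+(\kappa_+-\epsilon)\int_{v_1}^v F(v')\,dv' \le F(v_1)+(\text{horizon flux on }\cH^+(v_1,v))$, and the horizon flux is $\lesssim e^{-2pv_1}$ by~\eqref{Price}. One then needs the analogous bound $F(v_1)\lesssim e^{-\kappa v_1}$ started from a fixed initial slice — this is obtained by a bootstrap/Gr\"onwall argument in $v$: a Gr\"onwall inequality applied to $G(v):=e^{(\kappa_+-\epsilon)v}\int_{v_1}^v F$ shows $F(v)$ and $\int_{v_1}^v F$ decay at rate $e^{-\kappa v}$ provided the driving horizon term decays at the comparable rate $e^{-2pv}$, and since $\kappa<\min\{\kappa_+-\epsilon,2p\}$ both the red-shift rate and the Price rate dominate, so the slower of the two, namely $\kappa$, is recovered. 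Feeding this back into~\eqref{enEst1} gives~\eqref{decayS} for the $\Sigma_r$ flux and~\eqref{decayR} for the bulk term, again using~\eqref{redEst} to convert the bulk $K^N$ integral into the spacetime integral of the derivative-squared quantities.

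Concretely, I would organize it as: (i) algebraic coercivity/equivalence lemma for $J^N,K^N$ on $[r_0,r_+]$; (ii) the closed energy identity on $\cR(v_1,v)$ with the red-shift lower bound substituted; (iii) a Gr\"onwall-type ODE argument in $v$ combining the red-shift exponential gain with the Price exponential horizon decay to extract rate $\kappa$; (iv) re-substitution to get all three displayed estimates, uniformly for the commuted fields $\psi=T^l\leo^I\phi$ since each is again a solution obeying~\eqref{Price}.

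The main obstacle I expect is step (iii) — organizing the interplay between the two exponential scales. The red-shift estimate only produces a bulk term with a \emph{fixed} positive weight $\kappa_+-\epsilon$, not a weight that grows toward the horizon, so one cannot simply absorb; instead one must carefully set up the Gr\"onwall inequality so that the $e^{-2pv_1}$ horizon data is not overwhelmed by the (a priori merely bounded) flux on $\cC_{v_1}$, which forces an inductive-in-$v_1$ argument: assume $F(v_1)\le A e^{-\kappa v_1}$ on a dyadic (or arbitrary) sequence and close the constant $A$. Getting the $\min\{\kappa_+,2p\}$ threshold exactly right — i.e.\ that $\kappa$ can be taken arbitrarily close to it but the argument degenerates at equality — is the delicate bookkeeping; everything else is standard vector-field-method computation. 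A secondary technical point is justifying that the fluxes are finite and the divergence theorem applies up to $r=r_+$ (the event horizon is a regular null boundary in the $v$-coordinates~\eqref{v_metric}), which follows from smoothness of $\phi$ on $\Mint\cup\cH^+$ and the regularity of $N$.
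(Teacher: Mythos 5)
Your proposal follows essentially the same route as the paper: the energy identity \eqref{enEst1} on $\cR(v_1,v_2)$, the red-shift lower bound \eqref{redEst} converting the bulk term into $(\kappa_+-\epsilon)\int_{v_1}^{v_2}f(v)\,dv$ with $f(v)$ the flux through $\cC_v(r_0,r_+)$, the horizon flux controlled by Price's Law, and the observation that \eqref{decayS} and \eqref{decayR} follow from \eqref{decayC} by positivity of all terms in the identity. The functional inequality you arrive at, $f(v_2)+(\kappa_+-\epsilon)\int_{v_1}^{v_2}f\le f(v_1)+Ce^{-2pv_1}$, is exactly the paper's.

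The one place where your write-up stops short of a proof is precisely the step you flag as the main obstacle: extracting $f(v)\lesssim e^{-\kappa v}$ for every $\kappa<\min\{\kappa_+-\epsilon,\,2p\}$ from that integrated inequality. Your Gr\"onwall substitution $G(v)=e^{(\kappa_+-\epsilon)v}\int_{v_1}^vf$ does not close as stated: $f$ is only continuous, not known to be monotone or differentiable, and a single integration of the inequality only yields boundedness of $\int f$, not pointwise exponential decay of $f$ at the claimed rate. The paper resolves this with a self-contained general lemma (Lemma~\ref{decayLemma}): from $f(t_2)+b\int_{t_1}^{t_2}f\le Nf(t_1)+Be^{-\Delta t_1}$ one first deduces $f(t_2)\le Nf(t)+Be^{-\Delta t}$ for all $t\le t_2$, integrates this repeatedly against the original inequality to obtain $\frac{(t_2-t_1)^n}{n!}f(t_2)\le \frac{N}{k^n}f(t_1)+\frac{(n+1)B}{k^n}e^{-\Delta t_1}$ with $k=\min\{b/N,\Delta\}$ for every $n$, and then sums the resulting series (after damping by $(1-\epsilon)^{2n}$) to get $e^{(1-\epsilon)^2k(t-t_0)}f(t)\lesssim 1$. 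This iterated-integration argument replaces your proposed dyadic bootstrap and is what makes the threshold $\min\{\kappa_+,2p\}$ (approachable but not attainable) come out cleanly; note also that no a priori decay of $f(v_1)$ is needed to start it, only finiteness on the initial slice, so your concern about the initial flux being ``merely bounded'' is absorbed automatically. I recommend you either prove your bootstrap in detail or adopt a lemma of this type; the rest of your outline (coercivity of $J^N$ and $K^N$ on $[r_0,r_+]$, commutation with $T$ and $\leo_i$, regularity of the boundary terms at $\cH^+$) matches the paper and is fine.
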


\begin{proof}
 The divergence Theorem  applied to the current $J_{\mu}^N[\psi]$ in the region $\cR(v_1, v_2)$ defined by Figure~\ref{start}, in Section~\ref{en_cur}, provides the energy identity already stated in \eqref{enEst1}.
%
\begin{comment}
\begin{eqnarray}
\label{enEst1}
\nonumber
\int\limits_ {{\cC}_{v_2}(r_0,r_+)} J_{\mu}^N[\psi] n^{\mu}_{\cC_{v}}\dVt_{\cC_{v}}
&+&
\int\limits_ {\Srn(v_1,v_2)} J_{\mu}^N[\psi] n^{\mu}_{\Sr} \dVt_{\Sr}+\int\limits_ {\cR(v_1,v_2)} K^N[\psi] \dV
\\
&=&
\int\limits_ {{\cC}_{v_1}(r_0,r_+)} J_{\mu}^N[\psi] n^{\mu}_{\cC_{v}} \dVt_{\cC_{v}} +
\int\limits_{\cH^+(v_1,v_2)}J_{\mu}^N[\psi]n^{\mu}_{\cH^+}\dVt_{\cH^+}\;.
\end{eqnarray}
\end{comment}
Since, by~\eqref{signCurrent} and~\eqref{redEst},  all the terms in the previous identity are non negative, we see that claims~\eqref{decayS}
and~\eqref{decayR}
follow immediately from~\eqref{decayC} and the assumption~\eqref{Price}. So, we only need to establish~\eqref{decayC}.
Notice also that since ${\cC}_{v_2}(r,r_+)\subset{\cC}_{v_2}(r_0,r_+)$, for all $r\geq r_0$, we just need to consider the case $r=r_0$.

Following~\cite{anne} we define
$$f(v):=\int\limits_ {{\cC}_{v}(r_0,r_+)} J_{\mu}^N[\psi] n^{\mu}_{\cC_{v}}\dVt_{\cC_{v}}\;.$$
Relying on Theorem~\ref{redshiftThm}, for any $\epsilon>0$, we can choose $r_0$ sufficiently close to $r_+$ such that
\begin{eqnarray*}
\int\limits_ {\cR(v_1,v_2)} K^N[\psi] \dV
&=&
\int_{v_1}^{v_2}\int_{r_0}^{r_+}\int_{\bbS^2} K^N[\psi] r^2 d\omega\, dr\,dv
\\
&\geq&
(\kappa_+-\epsilon)\int_{v_1}^{v_2}\left(\int_{r_0}^{r_+}\int_{\bbS^2}  J_{\mu}^N[\psi] n^{\mu}_{\cC_{v}} r^2 d\omega\, dr\right)\,dv
\\
&=&
(\kappa_+-\epsilon)\int_{v_1}^{v_2} f(v)dv\;.
\end{eqnarray*}
Then, dropping the second term from the left-hand side of~\eqref{enEst1} while using~\eqref{Price} one obtains
$$f(v_2)+ (\kappa_+-\epsilon)\int_{v_1}^{v_2} f(v)dv \leq f(v_1)+ Ce^{-2p\,v_1}\;,$$
for all $0 \leq v_1 \leq v_2$. Claim~\eqref{decayC} is then a consequence of the following general result:

\begin{lem}
\label{decayLemma}
For $t_0\geq 0$, let $f$ be a continuous function, $f:[t_0, \infty)\rightarrow \bbR^+$, satisfying
\bea
\lb{starter3}
f(t_2) + b\int_{t_1}^{t_2} f(t)\md {t} \leq N\,f(t_1) +B\,e^{-\Delta {t_1}},
\eea
for given constants $b,N,\Delta,B>0$ and all $0\leq {t_1}\leq t_2$.

Then, for any $\kappa<\min\{b/N, \Delta\}$ we have
\bea
\lb{ftwant}
f(t) \leq C_{\kappa} e^{-\kappa\,{t}},
\eea
for all $t\geq 0$, where $C_{\kappa}>0$ is a constant that might blow up as $\kappa$ approaches $\min\{b/N, \Delta\}$.
\end{lem}
\begin{proof}
In particular we have, for all $t\leq t_2$,
$$f(t_2)\leq Nf(t)+Be^{-\Delta {t}},$$
which integrated on any interval of the form $[t_1,t_2]$ implies
$$(t_2-t_1)f(t_2)\leq N\left[\frac{N}{b}f(t_1)+\frac{B}{b}e^{-\Delta {t_1}}\right]+\frac{B}{\Delta}e^{-\Delta {t_1}}\;.$$
Setting $k=\min\{b/N, \Delta\}$ we then see that, for all $t_1\leq t_2$,
$$(t_2-t_1)f(t_2)\leq \frac{N}{k}f(t_1)+\frac{2B}{k}e^{-\Delta {t_1}}\;.$$
If we keep on integrating, it follows by induction that, for all $n\in\bbN_0$ and all $t_1\leq t_2$,
$$\frac{(t_2-t_1)^n}{n!}f(t_2)\leq \frac{N}{k^n}f(t_1)+\frac{(n+1)B}{k^n}e^{-\Delta {t_1}}\;.$$
Setting $t=t_2$ and $t_1=t_0$ in the previous inequality we see that
$$\frac{k^n(t-t_0)^n}{n!}f(t)\leq Nf(t_0)+B(n+1)e^{-\Delta {t_0}}\;.$$
Since the right-hand side is not summable, we choose $0<\epsilon<1$ and multiply both sides by $(1-\epsilon)^{2n}$. We then use the fact that
$(1-\epsilon)^n(n+1)\leq C(\epsilon)$, $\forall n$, to conclude that
$$\frac{\left((1-\epsilon)^2k(t-t_0)\right)^n}{n!}f(t)\leq C(\epsilon)(1-\epsilon)^n\;.$$
Summing both sides leads to the desired result presented in the form:
$$e^{(1-\epsilon)^2k(t-t_0)}f(t)\leq \frac{C(\epsilon)}{\epsilon}\;.$$
\end{proof}
\end{proof}

%%%%%%%%%%%%%%%%%%%%%%%%%%%%%%%%%%%%%%%%% Conditional pointwise estimates
%%%%%%%%%%%%%%%%%%%%%%%%%%%%%%%%%%%%%%%%%%%%%%%%%%%%%%%%%%%%%%%%%%%%%%%%%%%%

\subsection{Conditional pointwise estimates}
\label{sectionCond}

We will now see how an energy estimate of the form~\eqref{decayS} implies pointwise estimates for the current
$$J_{\mu}^N[\psi] n^{\mu}_{\cC_{v}}\sim \left(\partial_r\psi\right)^2+\left|{\nabb}\psi\right|^2\;.$$

At the moment, we only have~\eqref{decayS} for $\kappa<\min\{\kappa_+,2p\}$ but,
later on, we will be able to improve this in
a iterative process where the estimates proven in this section will play a crucial role;
therefore, in anticipation of such fact,
we will present our estimates under the following assumption:

\begin{ass}
\lb{energyAss}
For a given $\bar{\kappa}>0$, any $\kappa<\bar{\kappa}$ and any function of the form~\eqref{psi},
there exists $r_0 \in (r_{-}, r_+)$ such that, for all $r\in[r_0,r_+]$ and all $0\leq v_1\leq v_2$, the following holds
\bea
\lb{decaySCond}
\int\limits_ {\Sigma_r(v_1,v_2)} J_{\mu}^N[\psi] n^{\mu}_{\Sr} \dVt_{\Sigma_r}
\sim
\int_{v_1}^{v_2}\int_{\bbS^2}
\left[\left(\partial_v\psi\right)^2+(-D)\left(\partial_r\psi\right)^2+\left|{\nabb}\psi\right|^2\right](v,r,\omega)d\omega dv
\lesssim_{l,I}  e^{-\kappa\, {v_1}}\;.
\eea
\end{ass}
Our main goal in this section is to prove the following:

\begin{prop}
\label{propDecayJ}
Let $\phi$ be a solution of~\eqref{waveEq} that satisfies Assumption~\ref{energyAss}. Then, for any $\kappa<\min\{\bar\kappa, 2\kappa_+\}$,
there exists $r_0 \in (r_{-}, r_+)$, such that,
for all $r\in[r_0,r_+]$, $\omega\in\bbS^2$ and $v\geq 0$,
\bea
\label{decayJ}
J_{\mu}^N[\psi] n^{\mu}_{\cC_{v}}(v,r,\omega)\lesssim e^{-\kappa v}\;.
\eea
\end{prop}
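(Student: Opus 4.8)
The statement to prove is Proposition~\ref{propDecayJ}: starting from the energy decay in Assumption~\ref{energyAss} (with rate $\kappa<\bar\kappa$), derive the \emph{pointwise} decay $J_\mu^N[\psi]n^\mu_{\cC_v}\sim(\partial_r\psi)^2+|\nabb\psi|^2\lesssim e^{-\kappa v}$ with rate $\kappa<\min\{\bar\kappa,2\kappa_+\}$, for $r$ close enough to $r_+$. The natural strategy is a Sobolev-type argument in the two ``good'' directions we control: the radial direction $r$ (on which $\cC_v(r_0,r_+)$ is an interval of bounded length) and the spherical directions (on which we can commute with the angular Killing fields $\leo_i$). The key mechanism converting integrated-in-$v$ decay into pointwise-in-$v$ decay is a fundamental-theorem-of-calculus estimate: $(\partial_r\psi)^2(v,r,\omega)$ is bounded by its value at some nearby $v'$ plus $\int |\partial_v\partial_r\psi|\,|\partial_r\psi|\,dv$, and both the $r$-integral of $(\partial_r\psi)^2$ and of $(\partial_v\partial_r\psi)^2=(\partial_r(\partial_v\psi))^2$ are controlled by energies of $\psi$ and of $T\psi=\partial_v\psi$, for which Assumption~\ref{energyAss} applies (since $\psi$ ranges over all $T^l\leo^I\phi$).

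\textbf{Step 1 (reduction to an $L^2$-in-$(r,\omega)$ bound with the right weights).} First I would note that, by the volume-form computations in Section~\ref{en_cur}, $\int_{\cC_v(r_0,r_+)}J_\mu^N[\psi]n^\mu_{\cC_v}\,\dVt_{\cC_v}\sim\int_{r_0}^{r_+}\int_{\bbS^2}[(\partial_r\psi)^2+|\nabb\psi|^2]\,r^2\,d\omega\,dr$, and that pointwise control of $J_\mu^N[\psi]n^\mu_{\cC_v}$ amounts to pointwise control of $(\partial_r\psi)^2$ and $|\nabb\psi|^2=r^{-2}\sum_i(\leo_i\psi)^2$ (by~\eqref{roundGrad}). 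Using a standard Sobolev embedding on $\bbS^2$ (which requires two angular derivatives, hence the commuted quantities $\leo^I\psi$ with $|I|\le2$), it suffices to obtain, for each $\psi'$ among $\psi,\leo_i\psi,\leo_i\leo_j\psi$,
\[
\int_{r_0}^{r_+}(\partial_r\psi')^2(v,r,\omega)\,dr\lesssim e^{-\kappa v},\qquad
\int_{r_0}^{r_+}(\psi')^2(v,r,\omega)\,dr\lesssim e^{-\kappa v}
\]
uniformly in $\omega$; the second (zeroth-order in $r$) piece follows from the first by another FTC step in $r$ using the value at $r_+$ (or, more simply, from the first-order bound plus control of $\psi'$ at one radius, which in turn comes from the Price's Law data on $\cH^+$).

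\textbf{Step 2 (the FTC-in-$v$ estimate).} Fix $\omega$ and $r$. For $v_1\le v$, write $(\partial_r\psi')^2(v,r,\omega)=(\partial_r\psi')^2(v_1,r,\omega)+2\int_{v_1}^{v}\partial_v\partial_r\psi'\cdot\partial_r\psi'\,dv'$. Integrating in $r\in[r_0,r_+]$ and applying Cauchy--Schwarz in $v'$,
\[
\int_{r_0}^{r_+}(\partial_r\psi')^2(v,\cdot,\omega)\,dr
\lesssim \int_{r_0}^{r_+}(\partial_r\psi')^2(v_1,\cdot,\omega)\,dr
+\Big(\int_{v_1}^{v}E_{v'}\,dv'\Big)^{1/2}\Big(\int_{v_1}^{v}\tilde E_{v'}\,dv'\Big)^{1/2},
\]
where $E_{v'}=\int_{r_0}^{r_+}(\partial_r\psi')^2(v',\cdot,\omega)\,dr$ and $\tilde E_{v'}=\int_{r_0}^{r_+}(\partial_r\partial_v\psi')^2(v',\cdot,\omega)\,dr$. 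The point is that both $\int_{v_1}^{v}\!\int_{r_0}^{r_+}\!\int_{\bbS^2}(\partial_r\psi')^2$ and the same with $\partial_r\psi'$ replaced by $\partial_r(T\psi')=\partial_r\partial_v\psi'$ are bounded by $\Sigma_r$-type energies of $\psi'$ and $T\psi'$, and Assumption~\ref{energyAss} gives these $\lesssim e^{-\kappa v_1}$ for every $\kappa<\bar\kappa$ (here crucially the assumption is phrased for the whole family~\eqref{psi}, so $T\psi'$ is admissible). However, this only sees the $v'$-integrated energies in a \emph{fixed} slab; to get the pointwise value one needs to choose $v_1$ comparably to $v$. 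Taking $v_1=v/2$ (or $v_1=v-1$), the anchoring term $(\partial_r\psi')^2(v_1,\cdot,\omega)$ must itself be estimated — and this is where the spherical Sobolev embedding re-enters, because to bound the \emph{pointwise-in-$\omega$} integral $\int_{r_0}^{r_+}(\partial_r\psi')^2(v_1,\cdot,\omega)\,dr$ we use $\int_{r_0}^{r_+}(\partial_r\leo^I\psi')^2(v_1,\cdot)\,d\omega\,dr\lesssim e^{-\kappa v_1}$ from the hypothesis. Running this once already yields decay at rate $\kappa$ with a loss only in the constant; the factor $2$ in $2\kappa_+$ does \emph{not} come from this step directly but is inherited from Step 3.

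\textbf{Step 3 (why the threshold becomes $\min\{\bar\kappa,2\kappa_+\}$, and the main obstacle).} The subtlety — and the main obstacle — is the appearance of $2\kappa_+$ as a ceiling on the achievable pointwise rate. It enters through the relation between $\partial_r\psi$ (the ``bad'' null derivative, weighted by $(-D)$ in the $\Sigma_r$-energy) and the geometry near $r_+$: from~\eqref{r-r}, $(-D)$ is comparable to $r_+-r\sim e^{\kappa_+(v-u)}$, so on the slab near $r_+$ a factor of $(-D)^{-1}\sim e^{-\kappa_+\cdot(\cdots)}$ must be absorbed when passing from the degenerate energy flux through $\Sigma_r$ to a non-degenerate $L^2$-in-$r$ bound for $(\partial_r\psi)^2$; tracking the weights carefully, one loses up to a factor that caps the exponential rate at $2\kappa_+$. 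Concretely, I expect one estimates $(\partial_r\psi')^2(v,r,\omega)\lesssim (\partial_r\psi')^2(v,r_0,\omega)+\int_{r_0}^{r}(\cdots)$ where the integrand, after using the wave equation to trade $\partial_r\partial_r\psi$ for lower-order and $\partial_v\partial_r\psi$ terms, carries the $(-D)$ weights; the $v$-decay one can salvage is then governed by $\min\{\bar\kappa,2\kappa_+\}$. Handling these $(-D)$-weights honestly — i.e.\ showing no additional rate is lost beyond $2\kappa_+$ and the spherical/Sobolev steps are uniform in $v$ — is the technical heart of the argument; everything else is routine commutation with $T$ and $\leo_i$, Cauchy--Schwarz, and FTC. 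I would carry out Steps 1--2 to fix notation and the decay mechanism, then devote the bulk of the proof to the weighted estimate of Step 3, finishing by combining the first-order-in-$r$ bound with a trivial FTC-in-$r$ (anchored by Price's Law on $\cH^+$) to recover the zeroth-order pointwise bound and hence~\eqref{decayJ}.
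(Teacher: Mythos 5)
There is a genuine gap: the heart of the proposition is the pointwise bound on $(\partial_r\psi)^2$, and your proposal never actually produces it. The paper's mechanism (Proposition~\ref{propPointDecayR}) is to rewrite the wave equation~\eqref{waveEF} as a first--order transport equation for $\prp$ along the outgoing characteristics $\frac{dr}{dv}=\frac12 D$, namely $\left(\partial_v+\frac12 D\partial_r\right)\prp+G\prp=S$ with $G=\frac{D}{r}+\frac12 D'\geq\kappa_+-\epsilon$ near $r_+$ and $|S|\lesssim e^{-\kappa v/2}$ coming from the already--established pointwise bounds on $\partial_v\psi$ and $\Dell\psi$. Integrating the resulting ODE gives $|\prp|\lesssim e^{-(\min\{\kappa/2,\kappa_+\}-\epsilon)v}$, and the threshold $2\kappa_+$ in~\eqref{decayJ} is exactly the square of the red-shift damping rate $\kappa_+$ of the homogeneous part. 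Your Step~3 correctly guesses that $2\kappa_+$ is a ceiling tied to the geometry near $r_+$, but the mechanism you propose (absorbing $(-D)^{-1}$ weights when passing from the degenerate $\Sigma_r$-flux to a non-degenerate $L^2_r$ bound) is not how the estimate is obtained, and the step is left as ``I expect one estimates\dots''; this is precisely the part of the proof that cannot be deferred.

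Two further problems undermine Step~2. First, the quantities you integrate there --- $\int_{r_0}^{r_+}(\partial_r\psi')^2\,dr$ on slices $\cC_{v'}$ and the spacetime integral $\int_{v_1}^{v}E_{v'}\,dv'$ --- are non-degenerate $\cC_v$-fluxes and bulk integrals, which are \emph{not} supplied by Assumption~\ref{energyAss}: that assumption only controls the degenerate $\Sigma_r$-flux, in which $(\partial_r\psi)^2$ carries the weight $(-D)$ vanishing at $r_+$. (In the paper the improved $\cC_v$-flux bound, Lemma~\ref{lemDecayWr}, is a \emph{consequence} of the pointwise bound~\eqref{decayJ}, not an input to it.) Second, your FTC-in-$v$ anchoring does not close: with $v_1=v/2$ the anchor term only yields $e^{-\kappa v/2}$, i.e.\ half the claimed rate, while with $v_1=v-1$ the anchor has the same size as the quantity being estimated, so the argument is circular. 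The paper resolves the analogous issue for $\partial_v\psi$ and $\leo_i\psi$ by deriving an integral inequality of the form $Z(v_2)+L\int_{v_1}^{v_2}Z\leq 2Z(v_1)+Ce^{-\kappa v_1}$ and invoking the Gronwall-type bootstrap of Lemma~\ref{decayLemma}; some such device is needed and is absent from your sketch. The parts of your proposal that do work --- commutation with $T$ and $\leo_i$, Sobolev embedding on $\bbS^2$, and FTC in $v$ for the tangential derivatives --- reproduce the paper's Proposition~\ref{propPointDecayVO}, but they only deliver the $|\nabb\psi|^2$ and $(\partial_v\psi)^2$ pieces of the current, not the transverse piece $(\partial_r\psi)^2$ that carries the $2\kappa_+$.
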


\begin{proof}
The desired result follows from Proposition~\ref{propPointDecayVO} and
Proposition~\ref{propPointDecayR}.\\
\end{proof}

%%%%%%%%%%%%%%%%%%%
%%%%%%%%%%%%%%%%%%%%%%%%%%%%%%%%%%%%%%

We will need the following basic result:

\begin{lem}
\label{lemExpDecay}
 Let $f:[t_0,+\infty[\rightarrow \bbR$ be a function for which there exists $\kappa_1,C>0$ such that, for all $t\geq t_0$,
$$\int_t^{\infty}f(\bar t) d\bar t\leq C e^{-\kappa_1t} \;.$$
Then, for all $0<\kappa_2<\kappa_1$,
$$\int_t^{\infty}e^{\kappa_2\bar t}f(\bar t) d\bar t\leq C e^{-(\kappa_1-\kappa_2) t} \;.$$
\end{lem}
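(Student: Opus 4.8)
The plan is to prove Lemma~\ref{lemExpDecay} by a direct integration-by-parts argument, treating the integrand $e^{\kappa_2\bar t}f(\bar t)$ as the product of the exponential with $f$, whose tail integral we control by hypothesis. Define $F(t):=\int_t^\infty f(\bar t)\,d\bar t$, so that $F$ is well-defined (the hypothesis gives absolute convergence of the tail since $f$ is assumed integrable at infinity, and in our applications $f\geq0$ anyway), $F'(t)=-f(t)$, and $0\le F(t)\le Ce^{-\kappa_1 t}$. Then I would write
\[
\int_t^\infty e^{\kappa_2\bar t}f(\bar t)\,d\bar t
= -\int_t^\infty e^{\kappa_2\bar t}F'(\bar t)\,d\bar t
= e^{\kappa_2 t}F(t) + \kappa_2\int_t^\infty e^{\kappa_2\bar t}F(\bar t)\,d\bar t,
\]
where the boundary term at $+\infty$ vanishes because $e^{\kappa_2\bar t}F(\bar t)\le Ce^{-(\kappa_1-\kappa_2)\bar t}\to0$ as $\bar t\to\infty$, using $\kappa_2<\kappa_1$.

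The two resulting terms are then estimated separately using $F(\bar t)\le Ce^{-\kappa_1\bar t}$: the boundary term gives $e^{\kappa_2 t}F(t)\le Ce^{-(\kappa_1-\kappa_2)t}$ directly, and the integral term gives
\[
\kappa_2\int_t^\infty e^{\kappa_2\bar t}F(\bar t)\,d\bar t
\le \kappa_2 C\int_t^\infty e^{-(\kappa_1-\kappa_2)\bar t}\,d\bar t
= \frac{\kappa_2 C}{\kappa_1-\kappa_2}e^{-(\kappa_1-\kappa_2)t}.
\]
Adding the two bounds yields a constant of the form $C\bigl(1+\tfrac{\kappa_2}{\kappa_1-\kappa_2}\bigr)=\tfrac{C\kappa_1}{\kappa_1-\kappa_2}$ times $e^{-(\kappa_1-\kappa_2)t}$; since the statement only claims the estimate ``$\le Ce^{-(\kappa_1-\kappa_2)t}$'' with an unspecified (possibly enlarged) constant, this is exactly what is wanted — one simply absorbs the factor $\kappa_1/(\kappa_1-\kappa_2)$ into $C$.

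An alternative, even shorter route avoids introducing $F$ and instead swaps the order of integration directly: write $e^{\kappa_2\bar t}=e^{\kappa_2 t}+\kappa_2\int_t^{\bar t}e^{\kappa_2 s}\,ds$, plug this into $\int_t^\infty e^{\kappa_2\bar t}f(\bar t)\,d\bar t$, and apply Fubini to the double integral, using $\int_s^\infty f(\bar t)\,d\bar t\le Ce^{-\kappa_1 s}$ on the inner integral; this reproduces the same two terms. I would probably present whichever of the two is cleaner, most likely the integration-by-parts version. The only point requiring a word of care — and the closest thing to an ``obstacle'' — is justifying the vanishing of the boundary term at infinity and the convergence of all the integrals involved; this is immediate from the hypothesized decay of $F$ together with $\kappa_2<\kappa_1$, so no genuine difficulty arises. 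In the applications of this lemma within the paper $f$ will be manifestly non-negative (it will be an energy density), which removes any concern about conditional versus absolute convergence.
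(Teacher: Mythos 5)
Your proof is correct, but it follows a genuinely different route from the paper's. The paper partitions $[t,\infty)$ into unit intervals $[n,n+1]$ with $n\geq\lfloor t\rfloor$, bounds $e^{\kappa_2\bar t}$ by its value $e^{\kappa_2(n+1)}$ at the right endpoint of each interval, applies the hypothesis in the form $\int_n^{n+1}f\leq\int_n^{\infty}f\leq Ce^{-\kappa_1 n}$, and sums the resulting geometric series. You instead integrate by parts against the tail function $F(t)=\int_t^{\infty}f(\bar t)\,d\bar t$ (or, equivalently, apply Fubini--Tonelli after writing $e^{\kappa_2\bar t}=e^{\kappa_2 t}+\kappa_2\int_t^{\bar t}e^{\kappa_2 s}\,ds$). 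Both arguments implicitly use $f\geq 0$, which the lemma as stated does not assume: the paper needs it to discard the piece of the integral over $[\lfloor t\rfloor,t]$ and to bound a unit-interval integral by the corresponding tail, while you need it to ensure $F\geq 0$ so that the boundary term at infinity is controlled by the one-sided bound $F\leq Ce^{-\kappa_1 t}$; you flag this explicitly, which is to your credit, and it is harmless since every application in the paper has $f\geq 0$. What each approach buys: your integration by parts gives the cleaner explicit constant $C\kappa_1/(\kappa_1-\kappa_2)$ and makes transparent that the constant degrades as $\kappa_2\to\kappa_1$, at the cost of invoking $F'=-f$ (requiring $f$ locally integrable so that $F$ is absolutely continuous — again automatic in the applications, and entirely avoided by your Fubini variant); the paper's discretization is marginally more elementary and applies verbatim to any non-negative measurable $f$, but produces a slightly uglier constant involving $e^{\kappa_2}$ and $\bigl(1-e^{-(\kappa_1-\kappa_2)}\bigr)^{-1}$. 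Either version is a complete and acceptable proof of the lemma as it is used in the paper.
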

\begin{proof}
Recall that the {\em floor} function is defined by $\lfloor t \rfloor:=\sup_{z\in\mathbb{Z}}\{z\leq t\}$ and trivially satisfies
$t-1\leq \lfloor t \rfloor\leq t$. Then

\begin{eqnarray*}
\int_t^{\infty}e^{\kappa_2\bar t}f(\bar t) d\bar t
&\leq&
\sum_{n= \lfloor t \rfloor}^{\infty} \int_n^{n+1} e^{\kappa_2\bar t}f(\bar t)d\bar t
\\
&\leq&
\sum_{n= \lfloor t \rfloor}^{\infty} e^{\kappa_2(n+1)} \int_n^{n+1} f(\bar t)d\bar t
\\
&\leq&
C\sum_{n= \lfloor t \rfloor}^{\infty} e^{-(\kappa_1-\kappa_2) n}
\\
&\leq&
C\sum_{n= \lfloor t \rfloor}^{\infty} \left(e^{-(\kappa_1-\kappa_2)} \right)^n
\\
&\leq&
C\frac{e^{-(\kappa_1-\kappa_2) \lfloor t \rfloor}}{1-e^{-(\kappa_1-\kappa_2)}}
\\
&\leq&
Ce^{-(\kappa_1-\kappa_2)t}\;.
\end{eqnarray*}
\end{proof}

%%%%%%%%%
%%%%%%%%%%%%%%

Anticipating the use of Sobolev embedding over the symmetry spheres of the background manifold we note that:

\begin{lem}
\label{h2s2}
 Under Assumption~\ref{energyAss} we have, for all $i\in\{1,2,3\}$,
\bea
\int_{v_1}^{v_2} \|\partial_v\psi(v,r,\,\cdot\,)\|^2_{H^2(\bbS^2)}dv\lesssim e^{-\kappa v_1}\;,
\eea
\bea
\int_{v_1}^{v_2} \|\leo_i\psi(v,r,\,\cdot\,)\|^2_{H^2(\bbS^2)}dv\lesssim e^{-\kappa v_1}\;,
\eea
and, consequently,
\bea
\int_{v_1}^{v_2} e^{\kappa v}\|\partial_v\psi(v,r,\,\cdot\,)\|^2_{H^2(\bbS^2)}dv\lesssim 1\;,
\eea
and
\bea
\int_{v_1}^{v_2} e^{\kappa v} \|\leo_i\psi(v,r,\,\cdot\,)\|^2_{H^2(\bbS^2)}dv\lesssim 1\;.
\eea
\end{lem}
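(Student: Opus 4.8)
The plan is to derive Lemma~\ref{h2s2} directly from Assumption~\ref{energyAss} by exploiting the fact that the generators of spherical symmetry $\leo_i$ are Killing, hence commute with $\Box_g$, so that $\leo_i\psi$ and $\leo_i\leo_j\psi$ are again functions of the form~\eqref{psi} to which Assumption~\ref{energyAss} applies. First I would recall the characterization of the $H^2(\bbS^2)$ norm in terms of the angular momentum operators: since $\{\leo_i\}$ span the Lie algebra of rotations, there is an equivalence
\[
\|h\|_{H^2(\bbS^2)}^2 \sim \|h\|_{L^2(\bbS^2)}^2 + \sum_{i=1}^3\|\leo_i h\|_{L^2(\bbS^2)}^2 + \sum_{i,j=1}^3\|\leo_i\leo_j h\|_{L^2(\bbS^2)}^2,
\]
which is just the statement that the Laplacian $\Dell$ together with its iterates controls the Sobolev norms on the sphere, combined with~\eqref{roundGrad} and~\eqref{roundLap}. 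Applying this with $h=\partial_v\psi$, and noting that $\partial_v$ and $\leo_i$ commute (both are among the coordinate/Killing directions), the right-hand side becomes a finite sum of terms $\|\partial_v(\leo^I\psi)\|_{L^2(\bbS^2)}^2$ with $|I|\le 2$; integrating in $v$ over $[v_1,v_2]$ and applying the first term on the left of~\eqref{decaySCond} to each $\psi'=\leo^I\psi$ (each of which is again of the form~\eqref{psi}), one obtains the first claimed bound $\int_{v_1}^{v_2}\|\partial_v\psi(v,r,\cdot)\|_{H^2(\bbS^2)}^2\,dv\lesssim e^{-\kappa v_1}$. The second bound, with $\leo_i\psi$ in place of $\partial_v\psi$, follows identically: $\leo_i\psi$ is of the form~\eqref{psi}, and $\|\leo_i\psi\|_{H^2(\bbS^2)}^2$ is controlled by $L^2$ norms of $\leo^J\psi$ with $|J|\le 3$, each again of the form~\eqref{psi}; the relevant $L^2$-in-$\omega$ quantities appear (up to the harmless weight $1/r^2$ in~\eqref{roundGrad}) inside the $|\nabb\,\cdot\,|^2$ contribution to~\eqref{decaySCond} applied to $\leo^{J'}\psi$.

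Having established the two decay-in-$v_1$ estimates, the two consequent bounds with the exponential weight $e^{\kappa v}$ follow by a routine reduction to Lemma~\ref{lemExpDecay}. Indeed, fix $\kappa'<\bar\kappa$ strictly between $\kappa$ and $\bar\kappa$; applying the already-proven estimates with $\kappa'$ in place of $\kappa$ gives $\int_t^{\infty}\|\partial_v\psi(v,r,\cdot)\|_{H^2(\bbS^2)}^2\,dv\lesssim e^{-\kappa' t}$ (the upper limit $v_2$ can be sent to $+\infty$ since the integrand is nonnegative and the bound is uniform in $v_2$). Then Lemma~\ref{lemExpDecay}, with $f(v)=\|\partial_v\psi(v,r,\cdot)\|_{H^2(\bbS^2)}^2$, $\kappa_1=\kappa'$ and $\kappa_2=\kappa$, yields $\int_t^{\infty}e^{\kappa v}\|\partial_v\psi\|_{H^2(\bbS^2)}^2\,dv\lesssim e^{-(\kappa'-\kappa)t}\lesssim 1$, and in particular the integral over $[v_1,v_2]$ is $\lesssim 1$ uniformly. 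The same argument with $\leo_i\psi$ in place of $\partial_v\psi$ gives the last bound.

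The only genuine subtlety — and the step I would be most careful with — is the equivalence of $\|\cdot\|_{H^2(\bbS^2)}$ with the sum of $L^2$ norms of up to two $\leo_i$-derivatives, together with the bookkeeping that keeps all the intermediate objects of the admissible form~\eqref{psi}. This is not deep: it is the standard fact that on a compact homogeneous space the Sobolev norms are equivalent to the graph norms of powers of any elliptic invariant operator built from the Killing fields, and here it is already implicitly packaged in the identities~\eqref{roundGrad}–\eqref{roundLap}. One should just note that applying $\leo_i$ never increases $l$ and only increases $|I|$ by one, so starting from $\psi=T^l\leo^I\phi$ one stays within the class to which Assumption~\ref{energyAss} (which quantifies over all such $\psi$) applies; the implicit constants in $\lesssim$ then depend on $l$ and $I$ but that dependence is exactly what is allowed. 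Everything else is a direct citation of Assumption~\ref{energyAss} and Lemma~\ref{lemExpDecay}.
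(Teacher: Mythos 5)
Your proposal is correct and follows essentially the same route as the paper: control of the $H^2(\bbS^2)$ norm by $L^2$ norms of up to two commuted $\leo$-derivatives, reduction to Assumption~\ref{energyAss} applied to $\leo^I\psi$ (using \eqref{roundGrad} and boundedness of $r$ for the second estimate), and Lemma~\ref{lemExpDecay} for the weighted versions. Your explicit choice of an intermediate rate $\kappa<\kappa'<\bar\kappa$ before invoking Lemma~\ref{lemExpDecay} is a small but genuine clarification of a step the paper leaves implicit.
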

\begin{proof}
The first estimate follows by an application of the Assumption~\ref{energyAss} to
\begin{eqnarray*}
 \int_{v_1}^{v_2}\|\partial_v\psi(v,r,\,\cdot\,)\|^2_{H^2(\bbS^2)} dv
&\lesssim&
\int\limits_ {\Sigma_r(v_1,v_2)} J_{\mu}^N[\psi] n^{\mu}_{\Sigma_r} \dVt_{\Sigma_r}
+
\sum_{i=0}^3\int\limits_ {\Sigma_r(v_1,v_2)} J_{\mu}^N[\leo_i\psi] n^{\mu}_{\Sigma_r} \dVt_{\Sigma_r}
\\
&+&
\sum_{i=0}^3\sum_{j=0}^3\int\limits_ {\Sigma_r(v_1,v_2)} J_{\mu}^N[\leo_i\leo_j\psi] n^{\mu}_{\Sigma_r} \dVt_{\Sigma_r} \lesssim e^{-\kappa v}\;.
\end{eqnarray*}
The second follows in a similarly fashion by using~\eqref{roundGrad} and boundedness of $r$, in our region of interest.
The last two are then an immediate consequence of Lemma~\ref{lemExpDecay}.
\end{proof}

We are now able to obtain our first pointwise bounds:

\begin{prop}
\label{propPointDecayVO}
Under Assumption~\ref{energyAss}
\bea
\lb{pointDecayV}
\left(\partial_v\psi(v,r,\omega)\right)^2\lesssim e^{-\kappa\, v}\;,
\eea
and, for all $i\in\{1,2,3\}$,
\bea
\lb{pointDecayO}
\left(\leo_i\psi(v,r,\omega)\right)^2\lesssim e^{-\kappa\, v}\;.
\eea
\end{prop}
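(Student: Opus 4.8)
The plan is to obtain the pointwise decay of $\partial_v\psi$ and $\leo_i\psi$ from the integrated energy decay in Assumption~\ref{energyAss} by combining a one-dimensional ``fundamental theorem of calculus with a good sign'' argument in the $v$-direction with Sobolev embedding $H^2(\bbS^2)\hookrightarrow C^0(\bbS^2)$ on the symmetry spheres. First I would fix $r\in[r_0,r_+]$ and $v\geq 0$, and apply Sobolev embedding on $\bbS^2$ to reduce the desired bounds to an estimate on $\|\partial_v\psi(v,r,\cdot)\|^2_{H^2(\bbS^2)}$ and $\|\leo_i\psi(v,r,\cdot)\|^2_{H^2(\bbS^2)}$; by Lemma~\ref{h2s2} we already control the $v$-integrals of these $H^2$-norms (with and without the $e^{\kappa v}$ weight).

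The heart of the argument is then the following. For $\psi$ itself the quantity we want to bound pointwise is $\partial_v\psi$, so I would look at $g(v):=\|\partial_v\psi(v,r,\cdot)\|^2_{H^2(\bbS^2)}$ (and analogously for $\leo_i\psi$) and write, for $v\leq \bar v$,
\begin{equation}
g(v)=g(\bar v)-\int_v^{\bar v}\partial_{\bar v}\,g(\bar v)\,d\bar v\,,
\end{equation}
so that $g(v)\lesssim g(\bar v)+\int_v^{\infty}|\partial_{\bar v} g|\,d\bar v$. The term $\partial_{\bar v} g$ produces, after Cauchy--Schwarz on $\bbS^2$, a product of $\|\partial_v\psi\|_{H^2}$ with $\|\partial_v^2\psi\|_{H^2}$; here I would use the wave equation~\eqref{waveEq} in the coordinates~\eqref{EFv} to trade the second $v$-derivative $\partial_v^2\psi$ for $\partial_r\partial_v\psi$, $\partial_v\psi$, $\nabb\psi$-type terms (schematically $2\partial_v\partial_r\psi = D\partial_r^2\psi+\ldots$ applied to $\psi$ and its $\leo$-derivatives), all of which are again controlled in the weighted $v$-integral by Lemma~\ref{h2s2} and Assumption~\ref{energyAss}. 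Averaging over $\bar v$ in a unit interval to select a good ``last slice'' value of $g(\bar v)$ with $g(\bar v)\lesssim \int_v^{v+1} g \lesssim e^{-\kappa v}$, and combining with the weighted integral bound $\int_v^\infty e^{\kappa\bar v}|\partial_{\bar v}g|\,d\bar v\lesssim 1$ (via Cauchy--Schwarz and the two conclusions of Lemma~\ref{h2s2}), yields $g(v)\lesssim e^{-\kappa v}$, which is exactly what Sobolev embedding needs. The same scheme, now with $\psi$ replaced by the various $\leo_i\psi$ and using~\eqref{roundGrad}, gives~\eqref{pointDecayO}.

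I expect the main obstacle to be the bookkeeping around $\partial_v^2\psi$: it does \emph{not} appear directly in the energy on $\Sigma_r$, so one must genuinely invoke the wave equation to re-express it, and then check that every resulting term is covered — in the right weighted norm — by Lemma~\ref{h2s2} applied not just to $\psi$ but to its $T$- and $\leo$-commuted versions (this is why Assumption~\ref{energyAss} is phrased for all $\psi$ of the form~\eqref{psi}, and why the $H^2(\bbS^2)$ rather than $L^2(\bbS^2)$ norm is needed). A clean way to package this is to note that $\partial_v\psi$ is itself, up to lower-order terms controlled by $D$ (which is bounded on $[r_0,r_+]$) and angular derivatives, expressible through $T\psi=\partial_v\psi$ and the commuted quantities, so the ``derivative of the energy density'' stays within the family of quantities to which the hypothesis applies. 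Once that is in place the rest is the routine one-dimensional Cauchy--Schwarz-and-good-slice argument sketched above, followed by Sobolev embedding on $\bbS^2$.
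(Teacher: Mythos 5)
Your proposal is essentially the paper's argument --- a fundamental-theorem-of-calculus step in $v$, Cauchy--Schwarz against the weight $e^{\kappa v}$, Sobolev embedding $H^2(\bbS^2)\hookrightarrow L^\infty(\bbS^2)$, and commutation with $T$ and the $\leo_i$ to supply the extra derivatives --- but the two differ in the bookkeeping of the final step. The paper applies the FTC directly to the quantity $\zeta\in\{\partial_v\psi,\leo_i\psi\}$ (each again of the form~\eqref{psi}), obtains for $Z(v,r)=\sup_{\omega}\zeta^2$ the pair of inequalities $Z(v_2,r)\leq Ce^{-\kappa v_1}+2Z(v_1,r)$ and $\int_{v_1}^{v_2}Z\,dv\lesssim e^{-\kappa v_1}$, and closes by feeding their sum into the ODE-type Lemma~\ref{decayLemma}. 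You instead differentiate the squared norm $g(v)=\|\partial_v\psi(v,r,\cdot)\|^2_{H^2(\bbS^2)}$ and close with a mean-value ``good slice'' on a unit interval plus Cauchy--Schwarz on the tail of $\int|\partial_v g|$; this is a legitimate alternative ending that avoids Lemma~\ref{decayLemma} altogether, at the price of having to estimate the cross term $\|\partial_v\psi\|_{H^2}\,\|\partial_v^2\psi\|_{H^2}$.

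One step as you state it would fail, although the failure is harmless. You propose to use the wave equation to ``trade the second $v$-derivative $\partial_v^2\psi$ for $\partial_r\partial_v\psi$, \dots''. In the coordinates~\eqref{EFv} one has $g^{vv}=0$, so the wave operator~\eqref{waveEF} contains no $\partial_v^2$ term: the equation lets you solve for the mixed derivative $\partial_v\partial_r\psi$ (which is exactly how Proposition~\ref{propPointDecayR} treats the radial derivative), not for $\partial_v^2\psi$. Fortunately no such substitution is needed: since $T=\partial_v$ is Killing and commutes with the $\leo_i$, the quantity $\partial_v^2\psi=\partial_v(T\psi)$ is the $\partial_v$-derivative of another function of the form~\eqref{psi}, so Assumption~\ref{energyAss} and Lemma~\ref{h2s2} control $\int e^{\kappa v}\|\partial_v(T\psi)\|^2_{H^2(\bbS^2)}\,dv$ directly. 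This is precisely the observation you make at the end of your last paragraph, and it is the one you should rely on; the detour through the wave equation should be deleted. With that correction your argument closes.
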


\begin{proof}
Squaring
\begin{eqnarray*}
\zeta(v_2,r,\omega)=\int_{v_1}^{v_2}\partial_v\zeta(v,r,\omega)dv + \zeta(v_1,r,\omega)\;,
\end{eqnarray*}
where $\zeta$ is an arbitrary function to be specialized later,
leads to
\begin{eqnarray*}
\zeta^2(v_2,r,\omega)
&\leq&
2 \left(\int_{v_1}^{v_2}\partial_v\zeta(v,r,\omega)dv\right)^2 + 2\zeta^2(v_1,r,\omega)
\\
&=&
2 \left(\int_{v_1}^{v_2}e^{-\frac{1}{2}\kappa v}e^{\frac{1}{2}\kappa v}\partial_v\zeta(v,r,\omega)dv\right)^2
+ 2\zeta^2(v_1,r,\omega)
\\
&\leq&
2 \int_{v_1}^{v_2}e^{-\kappa v}dv \int_{v_1}^{v_2}e^{\kappa v}\left(\partial_v\zeta(v,r,\omega)\right)^2dv
+ 2\zeta^2(v_1,r,\omega)
\\
&\leq&
C e^{-\kappa v_1} \int_{v_1}^{v_2}e^{\kappa v}\left(\partial_v\zeta(v,r,\omega)\right)^2dv + 2\zeta^2(v_1,r,\omega)\;.
\end{eqnarray*}
Setting
$$Z(v,r):=\sup_{\omega\in\bbS^2} \zeta^2(v,r,\omega)\;,$$
the previous inequality gives, by Sobolev embedding in $\mathbb{S}^2$,
\begin{eqnarray*}
Z(v_2,r)
&\leq&
Ce^{-\kappa v_1}\int_{v_1}^{v_2}e^{\kappa v}\sup_{\omega\in\bbS^2}\left(\partial_v\zeta(v,r,\omega)\right)^2dv+2Z(v_1,r)
\\
&\leq&
Ce^{-\kappa v_1}\int_{v_1}^{v_2}e^{\kappa v}\|\partial_v\zeta(v,r,\,\cdot\,)\|_{H^2(\bbS^2)}^2dv+2Z(v_1,r)\;.
\end{eqnarray*}

Now, if $\zeta$ is of the form~\eqref{psi} we then have, by Lemma~\ref{h2s2},
$$\int_{v_1}^{v_2}e^{\kappa v}\|\partial_v\zeta(v,r,\,\cdot\,)\|_{H^2(\bbS^2)}^2dv\lesssim 1\;,$$
and consequently, for such functions,
\bea
\label{psiControl1}
Z(v_2,r)\leq Ce^{-\kappa v_1} +2Z(v_1,r)\;.
\eea
Moreover, for any function of the form $\zeta=\partial_v(\partial_v^l\leo^I\phi)$ or
$\zeta=\leo_i(\partial_v^l\leo^I\phi)$,  we have, again by Lemma~\ref{h2s2},
\bea
\label{psiControl2}
\int_{v_1}^{v_2}Z(v,r)dv\leq C \int_{v_1}^{v_2}\|\zeta(v,r,\,\cdot\,)\|_{H^2(\bbS^2)}^2dv\lesssim e^{-\kappa\, v_1} \;.
\eea
Multiplying the last inequality by an arbitrary constant $L>0$ and adding the result to~\eqref{psiControl1} yields
\bea
\label{psiControl2}
Z(v_2,r)+L\int_{v_1}^{v_2}Z(v,r)dv\leq 2Z(v_1,r)+Ce^{-\kappa v_1}\;,
\eea
and the result now follows by choosing $L>2\kappa$ and applying, for each $r\in[r_0,r_+]$, Lemma~\ref{decayLemma}
to the functions $f_r(v):=Z(v,r)$.
\end{proof}

The next is an immediate consequence of the last proposition and the equations~\eqref{roundGrad}
and~\eqref{roundLap}:

\begin{cor}
\label{corDelta}
Under Assumption~\ref{energyAss}
\bea
\lb{pointDecayGrad}
\left|\nabb\psi\right|^2\lesssim e^{-\kappa\, v}\;,
\eea
and
\bea
\lb{pointDecayGrad}
\left(\Dell\psi\right)^2\lesssim e^{-\kappa\, v}\;.
\eea
\end{cor}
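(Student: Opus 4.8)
The plan is to reduce both estimates to the pointwise bound \eqref{pointDecayO} of Proposition~\ref{propPointDecayVO}, which already controls every individual angular derivative $\leo_i\psi$, using only the algebraic identities \eqref{roundGrad} and \eqref{roundLap} together with the fact that on the region of interest $r\in[r_0,r_+]$ the weights $r^{-2}$ and $r^{-4}$ appearing in those identities are uniformly bounded (and bounded below away from zero, so their reciprocals are bounded).

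First I would treat $\nabb\psi$. Applying \eqref{roundGrad}, which is a pointwise identity on the symmetry spheres and hence holds for any function, in particular for $\psi$, gives $|\nabb\psi|^2=r^{-2}\sum_{i=1}^3(\leo_i\psi)^2$. Since $\psi=T^l\leo^I\phi$ and $T$ commutes with each $\leo_i$ (all being Killing on Reissner--Nordstr\"om--de Sitter), each $\leo_i\psi=T^l\leo_i\leo^I\phi$ is again a function of the form \eqref{psi}, with the same $l$ and a multi-index $I$ enlarged by one entry. Proposition~\ref{propPointDecayVO} then yields $(\leo_i\psi)^2\lesssim e^{-\kappa v}$ for $i\in\{1,2,3\}$, and summing the three terms and absorbing the bounded factor $r^{-2}$ gives $|\nabb\psi|^2\lesssim e^{-\kappa v}$.

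For $\Dell\psi$ the argument is identical, carried one step further. By \eqref{roundLap} we have $(\Dell\psi)^2=r^{-4}\sum_{i,j=1}^3(\leo_i\leo_j\psi)^2$, and one may apply \eqref{pointDecayO} to the function $\leo_j\psi$ — which, by the same commutation remark as above, is itself of the admissible form \eqref{psi} — to obtain $(\leo_i\leo_j\psi)^2\lesssim e^{-\kappa v}$ for all $i,j\in\{1,2,3\}$. Summing over the nine terms and using boundedness of $r^{-4}$ on $[r_0,r_+]$ then yields $(\Dell\psi)^2\lesssim e^{-\kappa v}$, as claimed.

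There is no genuine obstacle in this corollary: it is an immediate consequence of the two pointwise decay statements already established, the commutativity $[T,\leo_i]=0$ of the stationary vector field with the generators of spherical symmetry, and the compactness of the radial interval $[r_0,r_+]$. The only point requiring a word of care is checking that the functions fed into Proposition~\ref{propPointDecayVO}, namely $\leo_i\psi$ and $\leo_j\psi$, are again of the form \eqref{psi}; this is exactly where the commutation relation and the structure of the admissible family \eqref{psi} are used.
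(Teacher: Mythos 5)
Your proof is correct and is exactly the paper's argument: the paper dismisses this corollary in one line as an immediate consequence of Proposition~\ref{propPointDecayVO} together with the identities~\eqref{roundGrad} and~\eqref{roundLap}, which is precisely the reduction you carry out. Your extra care in checking that $\leo_i\psi$ and $\leo_i\leo_j\psi$ remain of the admissible form~\eqref{psi} via $[T,\leo_i]=0$, and that the weights $r^{-2}$, $r^{-4}$ are bounded on $[r_0,r_+]$, merely makes explicit what the paper leaves implicit.
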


We now turn to the pointwise estimates of the radial derivatives.

\begin{prop}
\label{propPointDecayR}
Let $\phi$ be a solution of~\eqref{waveEq} that satisfies Assumption~\ref{energyAss}. Then, for any  $\kappa<\min\{\bar{\kappa},2\kappa_+\}$, there exists $r_0 \in (r_{-}, r_+)$,
such that,
for all $r\in[r_0,r_+]$, $\omega\in\bbS^2$ and $v\geq 0$, we have
\bea
\lb{pointDecayR}
\left(\partial_r\psi(v,r,\omega)\right)^2\lesssim e^{-\kappa\, v}\;.
\eea
\end{prop}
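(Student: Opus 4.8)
The goal is to upgrade the integrated decay statement for $(\partial_v\psi)^2$, $|\nabb\psi|^2$ (and their commuted versions) to a \emph{pointwise} decay statement for $(\partial_r\psi)^2$. The plan is to exploit the wave equation itself to trade a $\partial_r$ derivative for the already-controlled quantities. Writing $\Box_g\phi=0$ in the Eddington--Finkelstein coordinates~\eqref{EFv}, the principal part in the $(v,r)$ plane is (up to lower-order terms) of the form $2\partial_v\partial_r(r\psi) + D\,\partial_r^2(r\psi)+\dots = -r\Dell\psi + (\text{terms linear in }\partial_v\psi,\partial_r\psi,\psi)$; the crucial structural feature near $r=r_+$ is that the coefficient of $\partial_r\psi$ in this expression has a definite sign governed by $D'(r_+)=2\kappa_+>0$ (this is precisely the red-shift mechanism, now used at the level of the equation rather than an energy current). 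So along an outgoing direction $\partial_v$ the quantity $\partial_r\psi$ satisfies a transport-type ODE with a damping coefficient $\approx\kappa_+$ and a source built entirely out of $\partial_v\psi$, $\Dell\psi$, $\nabb\psi$ and $\psi$ — each of which we already know decays pointwise like $e^{-\kappa v}$ by Proposition~\ref{propPointDecayVO} and Corollary~\ref{corDelta}.

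Concretely, first I would rewrite the wave equation as a first-order ODE in $v$ (with $r,\omega$ as parameters) for the unknown $X:=\partial_r\psi$ — or more conveniently for $\partial_r(r\psi)$ to absorb the first-order $r$-terms — of the schematic form $\partial_v X + a(r)X = \mathcal{S}(v,r,\omega)$, where $a(r_+)=\kappa_+>0$ and hence $a(r)\geq \kappa_+-\epsilon$ for $r_0$ chosen sufficiently close to $r_+$, and where $\mathcal S$ is a linear combination, with bounded coefficients, of $\partial_v\psi$, $\partial_r\psi$ through lower order, $\Dell\psi$, $\nabb\psi$ and $\psi$. Solving this ODE by the integrating factor $e^{a(r)v}$ along the integral curves of $\partial_v$ gives $X(v_2)=e^{-a(r)(v_2-v_1)}X(v_1)+\int_{v_1}^{v_2}e^{-a(r)(v_2-v)}\mathcal S(v)\,dv$; using $|\mathcal S(v)|\lesssim e^{-\frac\kappa2 v}$ (from the earlier pointwise bounds, with the square roots taken) together with $a(r)\geq\kappa_+-\epsilon$ and the constraint $\kappa<2\kappa_+$ (so that $\kappa/2<\kappa_+-\epsilon$ for suitable $\epsilon$), the Grönwall/Duhamel integral is bounded by $Ce^{-\frac\kappa2 v_2}$. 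Squaring yields~\eqref{pointDecayR}. The initial value $X(v_1)$ at finite $v_1$ is harmless since we only need $r\in[r_0,r_+]$, $v\ge0$, and one can start the integration from the event horizon $r=r_+$ (where $v_1=0$ region is accessible) — or, if one prefers to avoid boundary regularity issues, pick $v_1$ bounded and absorb the resulting $O(1)$ contribution at the cost of enlarging the constant, which is admissible because we are only claiming an upper bound with a $\kappa$ strictly below the threshold.

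A technical point worth getting right: $\psi$ itself is not yet known to decay, only its first derivatives. To close the argument I would either (i) commute and work with $\partial_v\partial_r\psi$ and $\leo_i\partial_r\psi$, for which the source terms involve only first derivatives of $\psi$ (already controlled), obtaining pointwise decay for $\partial_v\partial_r\psi$ and $\leo_i\partial_r\psi$ and hence, via the same $\omega$-integration/Sobolev argument as in Proposition~\ref{propPointDecayVO}, decay of $\partial_r\psi$ up to an additive ``constant of integration in $v$'' which one argues is in fact zero or negligible; or, more cleanly, (ii) observe that since every $\psi=T^l\leo^I\phi$ satisfies the same hypotheses, the bound $(\partial_v\psi)^2\lesssim e^{-\kappa v}$ applied to $\psi=\partial_v(\dots)$ and integration in $v$ gives $\psi = \psi|_{v=\infty} + O(e^{-\frac\kappa2 v})$, and $\psi|_{v=\infty}$ along a curve of constant $r$ can be shown to vanish by energy decay along $\cH^+$ plus propagation — so $\psi$ itself also decays and may legitimately appear in $\mathcal S$. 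I expect the main obstacle to be precisely this bookkeeping of the zeroth-order term and making sure the sign of the damping coefficient $a(r)$ is genuinely $\kappa_+-\epsilon$ after all the lower-order $r$-dependent terms from expanding $\Box_g$ in $(v,r,\omega)$ have been collected; everything else (Duhamel, Grönwall, Sobolev on $\mathbb S^2$) is routine given the machinery already in place.
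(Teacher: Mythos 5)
Your strategy is the paper's: rewrite the wave equation in the coordinates~\eqref{EFv} as a damped transport equation for $\partial_r\psi$, with damping coefficient bounded below by $\kappa_+-\epsilon$ for $r_0$ close enough to $r_+$ (the red-shift used at the level of the equation), bound the source pointwise by $e^{-\kappa v/2}$ via Proposition~\ref{propPointDecayVO} and Corollary~\ref{corDelta}, and conclude by Duhamel, using $\kappa/2<\kappa_+$ to close. Two corrections are needed, neither fatal. First, the transport operator is $\partial_v+\tfrac12 D\partial_r$, not $\partial_v$: you cannot solve an ODE ``in $v$ with $r$ as a parameter,'' because the leftover term $\tfrac12 D\,\partial_r^2\psi$ is a second radial derivative that is not controlled by anything established so far and hence cannot be dumped into $\mathcal S$. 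The integration must be carried out along the outgoing null characteristics $dr/dv=\tfrac12 D$, which is exactly what the paper does (fixing $v_0$ so that every characteristic through a point of $\{r_0\le r\le r_+\}$ meets $\cC_{v_0}(r_0,r_+)$ and remains in that radial range). Second, your ``technical point'' about the zeroth-order term is a non-issue: $\Box_g$ acting on a scalar has no undifferentiated term, and if you work directly with $\partial_r\psi$ rather than $\partial_r(r\psi)$ the source is exactly $-\tfrac1r\partial_v\psi-\tfrac12\Dell\psi$, both of which are already controlled; your workaround (ii) would moreover be hazardous, since there is no reason for $\psi$ to tend to zero along $\Sigma_r$. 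With these two repairs your argument coincides with the paper's proof.
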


\begin{proof}
In our coordinates~\eqref{EFv}
the wave operator reads
\bea
\label{waveEF}
\square_g \psi= D\partial_r^2\psi +2 \partial_v\partial_r\psi
+\frac{2}{r}\partial_v\psi+\left(\frac{2}{r}D+D'\right)\partial_r\psi+\Dell\psi\;.
\eea
Therefore, for any solution $\psi$ of the wave equation we have
\bea
\label{waveNearH}
\left(\partial_v+\frac{1}{2}D\partial_r \right)\prp+G\prp= S\;,
\eea
where
$$G=\frac{1}{r}D+\frac{1}{2}D'\;,$$
and
$$S=-\frac{1}{r}\partial_v\psi-\frac{1}{2}\Dell\psi\;.$$
According to Proposition~\ref{propPointDecayVO} and Corollary~\ref{corDelta} the source term satisfies the estimate
\bea
\label{sourceEstimate}
|S|\lesssim e^{-\frac{1}{2}\kappa\,v }\;.
\eea
The characteristic of~\eqref{waveNearH}, passing through the point $(v_1,r_1,\omega_1)$ is the outgoing null line
\bea
\label{char}
\chi(v)=\chi(v;v_1,r_1,\omega_1)=(v,r(v;v_1,r_1,\omega_1),\omega_1)\;,
\eea
with $v\mapsto r(v;v_1,r_1,\omega_1)$ determined by
$$
\left\{
\begin{array}{l}
\frac{dr}{dv}=\frac{1}{2}D \\
r(v_1;v_1,r_1,\omega_1)=r_1\;.
\end{array}
\right.
$$
We note that
\bea
\label{derivativeChar}
\frac{d}{dv}\left(\prp\circ\chi(v)\right)=\left(\partial_v+\frac{1}{2}D\partial_r \right)\prp\circ\chi(v)\;.
\eea
We have $D(r)<0$, for $r_0<r<r_+$, $D(r_+)=0$ and $D'(r_+)=2\kappa_+>0$. Therefore, given $\epsilon>0$ we may choose  $r_0$ smaller,
but sufficiently close to $r_+$ such that, for $r\in[r_0,r_+]$,
\bea
\label{Gestimate}
G(r)\geq \kappa_+-\epsilon\;.
\eea
We can then fix $v_0\geq 0$ so that all characteristics with $r_1\in[r_0,r_+]$ intersect the compact set $\cC_{v_0}(r_0,r_+)$, at exactly one point, and remain in $r_0\leq r \leq r_+$, for all $v_0\leq v\leq v_1$. Using~\eqref{derivativeChar}, let us integrate~\eqref{waveNearH}
from $\cC_{v_0}(r_0,r_+)$ along the characteristics;
by taking into account~\eqref{Gestimate} and~\eqref{sourceEstimate} we obtain,
for all $v\geq v_0$, $r\in[r_0,r_+]$ and $\omega\in\bbS^2$,
\begin{eqnarray*}
\left|\prp(v,r,\omega)\right|
&\leq&
\left|\prp\circ\chi(v_0;v,r,\omega)\right|e^{-\int_{v_0}^{v}G\circ\chi(\tilde v)d\tilde v}+
\int_{v_0}^{v} \left|S\circ\chi(\tilde v;v,r,\omega)\right|e^{-\int_{\tilde v}^{v}G\circ\chi(\bar v)d\bar v}d\tilde v\;.
\\
&\leq&
Ce^{-(\kappa_+-\epsilon)(v-v_0)}+
\int_{v_0}^{v} Ce^{-\frac{1}{2}\kappa \tilde v}e^{-(\kappa_+-\epsilon)(v-\tilde v)}d\tilde v\
\\
&\lesssim&
e^{-(\min\{\frac{1}{2}\kappa,\kappa_+\}-\epsilon)v}\;.
\end{eqnarray*}
Since $\epsilon>0$ can be made arbitrarily small, we are done.

\end{proof}

%%%%%%%%%%%%%%%%%%%%%%%%%%%%%%%%%%%%%%%%% Improved estimates
%%%%%%%%%%%%%%%%%%%%%%%%%%%%%%%%%%%%%%%%%%%%%%%%%%%%%%%%%%%%%%%%%%%%%%%%%%%%

\subsection{Conditional energy estimates}
\label{sectionImprov}

We will now use the conditional pointwise estimates derived in the previous section to obtain conditional energy estimates, which will later lead, by the iteration scheme of Section~\ref{secIt},  to improved energy estimates.

\begin{lem}
\label{lemDecayWr}
Let $\phi$ be a solution of~\eqref{waveEq} that satisfies Assumption~\ref{energyAss}. Then, for all $v\geq 0$ and $u\geq u_{r_0}(v)$
we have
\bea
\int\limits_ {{\cC}_{v}(u,\infty)} J_{\mu}^N[\psi] n^{\mu}_{\cC_{v}}\dVt_{\cC_{v}}\lesssim
e^{\kappa_+(v-u)-\kappa v}\;.
\eea
\end{lem}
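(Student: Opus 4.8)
The goal of Lemma~\ref{lemDecayWr} is to control the flux through the \emph{outgoing} null segment $\cC_v(u,\infty)$ (which runs from the interior out towards the event horizon at $u=+\infty$) in terms of the data on $\cH^+$, but now weighted appropriately. The natural strategy is to use the conditional pointwise estimate of Proposition~\ref{propPointDecayR} and Corollary~\ref{corDelta} to bound the integrand directly, and then to integrate over the range of $u$ on $\cC_v$, converting the coordinate range into an $r$-range via~\eqref{r-r}. More precisely, on $\cC_v(u,\infty)$ with the choice $n^\mu_{\cC_v}=-\partial_r$ and $\dVt_{\cC_v}=r^2\,dr\,d\omega$, the current satisfies $J^N_\mu[\psi]n^\mu_{\cC_v}\sim (\partial_r\psi)^2+|\nabb\psi|^2$, and Proposition~\ref{propPointDecayR} together with Corollary~\ref{corDelta} give the pointwise bound $J^N_\mu[\psi]n^\mu_{\cC_v}(\bar v,r,\omega)\lesssim e^{-\kappa \bar v}$ for $r\in[r_0,r_+]$; here though we are on a \emph{constant}-$v$ slice, so the relevant bound is $\lesssim e^{-\kappa v}$ uniformly along the segment. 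That alone would only give a bound $\lesssim e^{-\kappa v}$ times the length of the segment in $r$, which is $O(1)$; the point of the lemma is to do better near $u=+\infty$ (i.e. near $r=r_+$), where the red-shift-type decay must be exploited.

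So the key refinement is that, as $u\to+\infty$ along $\cC_v$, the radius $r(u,v)\to r_+$, and by~\eqref{r-r} we have $r_+-r(u,v)=e^{O(1)}e^{\kappa_+(v-u)}$. I would therefore split the integral over $\cC_v(u,\infty)$ at some fixed intermediate radius, or better, integrate the pointwise bound against $dr$ using the change of variables $dr \sim \kappa_+ (r_+-r)\,du \sim \kappa_+ e^{O(1)}e^{\kappa_+(v-u)}\,du$. Feeding $J^N_\mu[\psi]n^\mu_{\cC_v}\lesssim e^{-\kappa v}$ into
\[
\int_{\cC_v(u,\infty)} J^N_\mu[\psi]n^\mu_{\cC_v}\,\dVt_{\cC_v}
\;\lesssim\; e^{-\kappa v}\int_{r(u,v)}^{r_+} r^2\,dr
\;\lesssim\; e^{-\kappa v}\,\bigl(r_+-r(u,v)\bigr)
\;\lesssim\; e^{-\kappa v}\, e^{\kappa_+(v-u)}\;,
\]
which is exactly the asserted bound $e^{\kappa_+(v-u)-\kappa v}$. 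The absorption of the $e^{O(1)}$ factor from~\eqref{r-r} into the implicit constant is harmless since $r$ ranges over a compact set.

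\textbf{Main obstacle.} The only genuinely delicate point is making sure the pointwise estimate from Proposition~\ref{propPointDecayR} is applicable uniformly on the entire segment $\cC_v(u,\infty)$ — i.e. that the $r_0$ furnished by that proposition can be taken once and for all and that the estimate holds up to and including $r=r_+$ (on $\cH^+$ it should be read via Price's law / Assumption~\ref{energyAss}). One must also be slightly careful that $\kappa$ here is the same $\bar\kappa$-controlled rate assumed in Assumption~\ref{energyAss}, and that the constraint $u\geq u_{r_0}(v)$ is precisely what guarantees the segment stays in $\{r\ge r_0\}$ where all the pointwise estimates are valid; for $u<u_{r_0}(v)$ the segment would reach radii below $r_0$ and the estimate would need a separate (trivial, compact-region) treatment which is not needed for this lemma. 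Everything else — the choice $n^\mu_{\cC_v}=-\partial_r$, the form of $\dVt_{\cC_v}$, and the change of variables from $u$ to $r$ — is routine bookkeeping already set up in Section~\ref{en_cur} and equation~\eqref{r-r}.
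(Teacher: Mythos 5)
Your proof is correct and is essentially the paper's own argument: both feed the uniform pointwise bound $J^N_\mu[\psi]n^\mu_{\cC_v}\lesssim e^{-\kappa v}$ (Proposition~\ref{propDecayJ}, i.e.\ Propositions~\ref{propPointDecayVO},~\ref{propPointDecayR} and Corollary~\ref{corDelta}) into the flux integral and use~\eqref{r-r} to bound the length of the $r$-range by $r_+-r(u,v)\lesssim e^{\kappa_+(v-u)}$. Your side remarks on the role of $u\geq u_{r_0}(v)$ and the uniform choice of $r_0$ are accurate but not needed beyond what the paper already records.
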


\begin{proof}
We simply have to note that, in the region under consideration $r(u,v)\in[r_0,r_+]$ and we have by~\eqref{r-r}
\bea
\lb{exrela}
r_+-r(u,v)\lesssim_{r_0} e^{\kappa_+(v-u)}\;.
\eea
Then, in view of~\eqref{decayJ},
\begin{eqnarray*}
 \int\limits_ {{\cC}_{v}(u,\infty)} J_{\mu}^N[\psi] n^{\mu}_{\cC_{v}}\dVt_{\cC_{v}}
 &\sim&
\int_{r(u,v)}^{r_+}\int_{\bbS^2}
J_{\mu}^N[\psi] n^{\mu}_{\cC_{v}}(v,r,\omega)d\omega dr
\\
&\lesssim&
\int_{r(u,v)}^{r_+}\int_{\bbS^2}
e^{-\kappa v}d\omega dr
\\
&\lesssim&
\left(r_+-r(u,v)\right)e^{-\kappa v}\;.
\end{eqnarray*}
\end{proof}

\begin{prop}
\label{propImprovedDecay}
Let $\phi$ be a solution of~\eqref{waveEq} that satisfies Price's Law (Assumption~\ref{pricesAss}), with $2p>\kappa_+$, and Assumption~\ref{energyAss}. Then, for any  $\kappa<\min\{\bar{\kappa},2\kappa_+\}$, any $m\geq 1$ and any $\delta>0$, there exists $r_0 \in (r_{-}, r_+)$ such that, for all $r_0\leq r \leq r_+$ and $0\leq v_1 \leq v_2$,
\bea
 \int\limits_ {\Sigma_r(v_1,v_2)} J_{\mu}^N[\psi] n^{\mu}_{\Sigma_r} \dVt_{\Sigma_r}
 \lesssim e^{-\Delta_1(m)v_1}+ e^{-\Delta_2(m)v_1}\;,
\eea
for
\begin{equation}
 \label{Delta1}
 \Delta_1=\frac{1}{m}\left[(m-1)(2-\delta)\kappa_++\kappa\right]\;,
\end{equation}
 and
 \begin{equation}
 \label{Delta2}
 \Delta_2=\frac{1}{m}\left[2p-\delta+(m-1)(1-\delta)\kappa_+\right]\;.
\end{equation}
\end{prop}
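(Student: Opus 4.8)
The plan is to feed the pointwise bounds of Section~\ref{sectionCond} into a weighted energy estimate driven by a modified red-shift vector field $N_{\mathrm{mod}}$ (Sbierski's, see~\eqref{modRS}), applied on a one-parameter family of regions $\cR_m$ adapted to the characteristic flow. The two facts I would record first are: (i) unlike the standard red-shift multiplier of Theorem~\ref{redshiftThm}, whose inequality~\eqref{redEst} only provides the constant $\kappa_+-\epsilon$, the modified red-shift can be arranged --- by further shrinking $r_0$ towards $r_+$ --- to satisfy a red-shift inequality $K^{N_{\mathrm{mod}}}[\psi]\geq c\,J^{N_{\mathrm{mod}}}_\mu[\psi]n^\mu_{\cC_v}$ with $c$ large enough to overcome the $v$-weights introduced below, the residual deficit being absorbed into $\delta$; and (ii) in $r_0\leq r\leq r_+$ the current $J^{N_{\mathrm{mod}}}_\mu[\psi]n^\mu$ is comparable, with constants depending only on $r_0$, to $(\partial_v\psi)^2+(-D)(\partial_r\psi)^2+|\nabb\psi|^2$, so that it may be freely interchanged with the currents in Assumptions~\ref{pricesAss} and~\ref{energyAss} and in Lemma~\ref{lemDecayWr}, and the final estimate transfers to $J^N_\mu[\psi]$. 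The identification $r_+-r\sim e^{\kappa_+(v-u)}$ from~\eqref{r-r} is what turns the geometry of $\cR_m$ into quantitative exponential factors.

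Next I would set up the co-moving region $\cR_m$: roughly, it is bounded to the past by a segment of the event horizon $\cH^+$, to the future by the target slice $\Sigma_r(v_1,v_2)$, and on the sides by finitely many ingoing null segments $\cC_{\bar v}(r,r_+)$; the role of $m$ is purely geometric, fixing how the region is sheared, i.e.\ the positions of these ingoing boundary segments and the length of the $\cH^+$-segment. For $m=1$ this is exactly the rectangular region of Figure~\ref{start}. For larger $m$ the region is sheared so that the ingoing boundary rays are pushed to $v$-coordinates of order $v_1+(v_2-v_1)/m$ --- deep enough that the pointwise estimates of Section~\ref{sectionCond} are already sharp there, shallow enough that the red-shift available inside $\cR_m$ is not wasted --- at the cost of correspondingly more, and shorter, ingoing segments, each contributing a factor improved by one further red-shifted step.

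I would then apply the divergence theorem to $J^{N_{\mathrm{mod}}}_\mu[\psi]$ over $\cR_m$, as in~\eqref{enEst1}. The bulk term $\int_{\cR_m}K^{N_{\mathrm{mod}}}[\psi]\,\dV$, which by the red-shift inequality controls a weighted version of the flux integrated over the intermediate ingoing slices of $\cR_m$, is fed into a Grönwall-type argument in the spirit of Lemma~\ref{decayLemma} and Lemma~\ref{lemExpDecay}; the input data are the remaining boundary fluxes, namely the flux through the $\cH^+$-segment, bounded by Price's Law (Assumption~\ref{pricesAss}) by $e^{-2p\,\tilde v_1}$ with $\tilde v_1$ its left endpoint, and the fluxes through the outermost ingoing segments $\cC_{\bar v}(r,r_+)$, bounded by Lemma~\ref{lemDecayWr} by $e^{\kappa_+(\bar v-u)-\kappa\bar v}$, with corner coordinates tied via~\eqref{r-r} to $r$ and to $m$. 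Substituting the $m$-dependent corner positions and carrying out the Grönwall/summation produces the two competing rates: $\Delta_1(m)$ of~\eqref{Delta1}, from the Lemma~\ref{lemDecayWr} contribution improved through the $\sim(m-1)$ red-shifted ingoing steps of $\cR_m$ (whence the weight $(m-1)(2-\delta)\kappa_+$ against the single $\kappa$), and $\Delta_2(m)$ of~\eqref{Delta2}, from the Price's Law contribution improved in the same way (whence $2p-\delta$ against $(m-1)(1-\delta)\kappa_+$), the overall factor $1/m$ reflecting that $\cR_m$ spans about $m$ red-shift lengths in $v$. The $\delta$'s arise from replacing $\kappa_\pm$ by $\kappa_\pm-\epsilon$ in~\eqref{redEst} and~\eqref{r-r}, from the finitely many applications of Lemma~\ref{lemExpDecay} that move the $v$-weights past the decaying fluxes, and from the geometric summation; the hypothesis $2p>\kappa_+$ is precisely what keeps the weighted horizon flux summable at every step.

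The step I expect to be the main obstacle is pinning down the geometry of $\cR_m$ so that the exponent bookkeeping closes: one must choose the shear, the length of the $\cH^+$-segment and the finitely many intermediate corners so that $\Sigma_r(v_1,v_2)$ genuinely appears as a boundary component with the favourable sign, every ingoing boundary segment lies in the region $r\in[r_0,r_+]$ where Lemma~\ref{lemDecayWr} is valid, and the worst exponent produced is exactly $\min\{\Delta_1(m),\Delta_2(m)\}$ and nothing smaller --- all the while keeping the implied constant uniform in $v_1,v_2$ even though the corners of $\cR_m$ move with $v_1$.
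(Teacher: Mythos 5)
You have assembled the right ingredients --- the weighted multiplier $\hat N_{\delta}=e^{(1-\delta)\kappa_+v}N$ of~\eqref{modRS}, an $m$-dependent region, Lemma~\ref{lemDecayWr} for the ingoing null flux, and Price's Law together with Lemma~\ref{lemExpDecay} (under $2p>\kappa_+$) for the horizon flux --- but the geometry of your region is wrong in a way that breaks the exponent bookkeeping. The past ingoing null boundary must sit at $v=v_1/m$, a fixed \emph{fraction} of $v_1$ strictly to the past of the target slice: the region actually used (Figure~\ref{region2}) is the single domain bounded by $\cC_{v_1/m}(u_r(v_1),\infty)$, $\cH^+(v_1/m,v_2)$ and $\Sigma_r(v_1,v_2)$, together with future null pieces whose fluxes are non-negative and simply discarded. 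The factor $1/m$ in~\eqref{Delta1}--\eqref{Delta2} is exactly the mismatch between the weight $e^{(1-\delta)\kappa_+v_1}$ gained on $\Sigma_r(v_1,v_2)$ and the weight $e^{(1-\delta)\kappa_+v_1/m}$ paid on the data, combined via~\eqref{r-r} with the smallness $r_+-r\lesssim e^{\kappa_+(v_1/m-v_1)}$ along the ingoing segment. Your corners at $v_1+(v_2-v_1)/m$ lie to the \emph{future} of $v_1$ and depend on $v_2$; with that anchoring no weight differential survives as $v_2\downarrow v_1$, the estimate cannot be uniform in $v_2$, and the stated rates do not come out. Likewise, the staircase of ``finitely many ingoing segments'' and the Gr\"onwall step over intermediate slices are not needed and are not what the paper does: a single divergence identity suffices because $K^{\hat N_{\delta}}[\psi]\geq 0$ lets one drop the bulk term outright.

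Your preliminary point (i) is also incorrect as stated: the red-shift constant in~\eqref{redEst} cannot be made ``large enough to overcome the $v$-weights'' by shrinking $r_0$ --- it is capped by $\kappa_+$, and one only ever obtains $\kappa_+-\epsilon$. This is precisely why the weight is chosen as $e^{(1-\delta)\kappa_+v}$ with $\delta>0$: the derivative of the weight falling on the current is then dominated by the available red-shift, yielding merely $K^{\hat N_{\delta}}[\psi]\geq 0$ rather than a strengthened coercivity. The $\delta$'s appearing in~\eqref{Delta1} and~\eqref{Delta2} are inherited from this choice of weight (and from the loss in Lemma~\ref{lemExpDecay} when moving the weight past the Price's Law flux), not from sharpening~\eqref{r-r} or from a geometric summation. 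Your identification of $2p>\kappa_+$ as the condition making the weighted horizon flux summable is correct.
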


\begin{proof}
Following~\cite{jan1} we consider, for $\delta>0$, the following modified red-shift vector field
\begin{equation}
\label{modRS}
\hat N_{\delta}:=e^{(1-\delta)\kappa_+ v}N\;.
\end{equation}
It turns out that~\cite{jan1}[pp. 112]
\begin{equation}
K^{\hat N_{\delta}}[\psi]\geq 0\;,
\end{equation}
provided $r_0$ is sufficiently close to $r_+$.

Then, the divergence
Theorem applied to the region $\cR$,
defined by Figure~\ref{region2}, gives rise to the following energy estimate
{\begin{figure}[ht]
\centering
\includegraphics[width=0.4\textwidth]{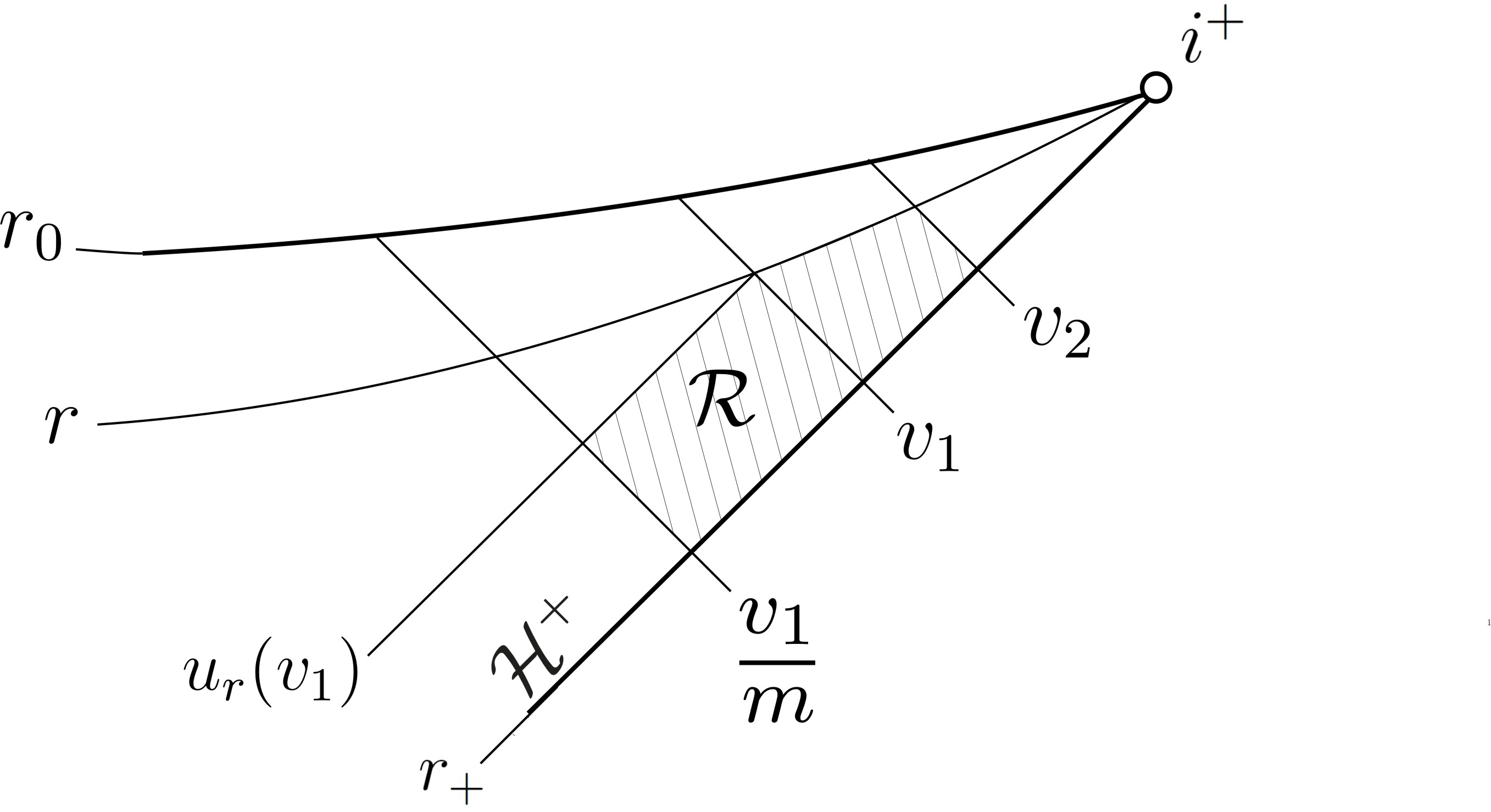}
\caption[]{Region $\cR$ represented as the hatched area.}
\label{region2}\end{figure}}
\begin{eqnarray*}
&&\int\limits_ {\Sr(v_1,v_2)} J_{\mu}^{\hat N_{\delta}}[\psi] n^{\mu}_{\Sr} \dVt_{\Sr}
\leq
\int\limits_ {{\cC}_{\frac{v_1}{m}}(u_r(v_1),\infty)} J_{\mu}^{\hat N_{\delta}}[\psi] n^{\mu}_{\cC_{v}} \dVt_{\cC_{v}} +
\int\limits_{\cH^+(\frac{v_1}{m},v_2)}J_{\mu}^{\hat N_{\delta}}[\psi]n^{\mu}_{\cH^+}\dVt_{\cH^+}\;.
\end{eqnarray*}
Now the term on the left satisfies
\begin{equation}
\int\limits_ {\Sr(v_1,v_2)} J_{\mu}^{\hat N_{\delta}}[\psi] n^{\mu}_{\Sr} \dVt_{\Sr}
\geq e^{(1-\delta)\kappa_+v_1} \int\limits_ {\Sr(v_1,v_2)} J_{\mu}^{N}[\psi] n^{\mu}_{\Sr} \dVt_{\Sr} \;.
\end{equation}
Since
$$u_r(v_1)=v_1-2r^*\;,$$
with the help of Lemma~\ref{lemDecayWr} the first term on the right can be estimated by
\begin{eqnarray*}
\int\limits_ {{\cC}_{\frac{v_1}{m}}(u_r(v_1),\infty)} J_{\mu}^{\hat N_{\delta}}[\psi] n^{\mu}_{\cC_{v}}
&=&
e^{\frac{1}{m}(1-\delta)\kappa_+ v_1}\int\limits_ {{\cC}_{\frac{v_1}{m}}(u_r(v_1),\infty)} J_{\mu}^{N}[\psi] n^{\mu}_{\cC_{v}}
\\
&\lesssim&
e^{\frac{1}{m}(1-\delta)\kappa_+ v_1}e^{\kappa_+(\frac{v_1}{m}-v_1+2r^*)}e^{-\frac{v_1}{m}\kappa}
\\
&\lesssim&
e^{2\kappa_+r^*}e^{-\frac{1}{m}\left[(m-2+\delta)\kappa_++\kappa \right]v_1}
\\
&\lesssim&
e^{-\frac{1}{m}\left[(m-2+\delta)\kappa_++\kappa \right]v_1}\;,
\end{eqnarray*}
where in the last step we used the fact that, by setting $r^*(r_0)=0$, we get
$r^*(r)\leq0$, for $r\geq r_0$.

For the final term we invoke Price's Law (Assumption~\ref{pricesAss}) and use Lemma~\ref{lemExpDecay}, for which we require the condition $2p>\kappa_+$, to obtain
\begin{eqnarray*}
\int\limits_{\cH^+(\frac{v_1}{m},v_2)}J_{\mu}^{\hat N_{\delta}}[\psi]n^{\mu}_{\cH^+}\dVt_{\cH^+}
&=&
\int\limits_{\cH^+(\frac{v_1}{m},v_2)}e^{(1-\delta)\kappa_+v}J_{\mu}^{N}[\psi]n^{\mu}_{\cH^+}\dVt_{\cH^+}
\\
&\lesssim&
e^{-\frac{1}{m}\left[2p-(1-\delta)\kappa_+-\delta \right]v_1}\;.
\end{eqnarray*}
Inserting the last three estimates into the energy estimate gives
\begin{equation*}
e^{(1-\delta)\kappa_+v_1} \int\limits_ {\Sr(v_1,v_2)} J_{\mu}^{N}[\psi] n^{\mu}_{\Sr} \dVt_{\Srn}
\lesssim
e^{-\frac{1}{m}\left[(m-2+\delta)\kappa_++\kappa \right]v_1}
+
e^{-\frac{1}{m}\left[2p-(1-\delta)\kappa_+-\delta \right]v_1}\;,
\end{equation*}
and the desired result follows by simply passing the exponential factor in the left to the right.
\end{proof}
%%%%%%%%%%%%%%%%%%%%%%%%%%%%%%%%%%%%%%%%% Iteration
%%%%%%%%%%%%%%%%%%%%%%%%%%%%%%%%%%%%%%%%%%%%%%%%%%%%%%%%%%%%%%%%%%%%%%%%%%%%

\subsection{Iteration and improved estimates}
\label{secIt}

For the sake of simplicity let us assume for a moment that  $p=\kappa_+$ in order to illustrate the iterative mechanism that will allow us
to improve the previously established estimates; we will return to the general case shortly. By Proposition~\ref{propDecay0},
there exists an $r_0<r_+$ such that, for all $r\in[r_0,r_+]$ and any $\kappa_0<\kappa_+$,
\begin{equation}
\label{it0}
 \int_{\Sr(v_1,v_2)}J^N_{\nu}[\psi]n^{\nu}_{\Sr}\lesssim e^{-\kappa_0 v_1}\;.
\end{equation}
In other words Assumption~\ref{energyAss} and consequently all the results in Sections~\ref{sectionCond} and~\ref{sectionImprov} hold for
$\bar\kappa=\bar\kappa_0=\kappa_+$.  To improve~\eqref{it0} we can (in this simplified setting) take for instance $m=m_0=2$ and then
$\Delta_1$ and $\Delta_2$ (recall~\eqref{Delta1} and~\eqref{Delta2}) can be chosen arbitrarily close to $\frac{3}{2}\kappa_+$;
it then follows from Proposition~\ref{propImprovedDecay}, with $\bar\kappa_0=\kappa_+$, that for any
$\kappa_1<\frac{3}{2}\kappa_+$,
$$\int_{\Sr(v_1,v_2)}J^N_{\nu}[\psi]n^{\nu}_{\Sr}\lesssim e^{-\kappa_1 v_1}\;.$$
So we see that Assumption~\ref{energyAss} now holds with
$\bar\kappa=\bar\kappa_1=\frac{3}{2}\kappa_+>\bar\kappa_0$. We can then repeat the process by choosing a new $m=m_1\geq 1$, where the lower bound is needed to apply Proposition~\ref{propImprovedDecay}, that allows us to improve the decay once more. In fact, as we will see, we can keep on improving this decay, even without the previous simplifying assumption ($p=\kappa_+$), until we reach:
\begin{thm}
\label{thmRNotEst}
Let $\phi$ be a solution of~\eqref{waveEq} that satisfies Price's Law (Assumption~\ref{pricesAss}),
with $2p>\kappa_+$. Then, for any $\epsilon>0$, there exists $r_0 \in (r_{-}, r_+)$ such that, for all $r_0\leq r \leq r_+$ and $0\leq v_1\leq v_2$,
\bea
\label{optEnergy}
\int\limits_ {\Sigma_r(v_1,v_2)} J_{\mu}^N[\psi] n^{\mu}_{\Sigma_r} \dVt_{\Sigma_r}
\lesssim e^{-2(\min\{\kappa_+,p\}-\epsilon)v_1}
\eea
and, consequently,
\begin{equation}
\label{optPointWise}
\left(\partial_v\psi(v,r,\omega)\right)^2+\left(\leo_i\psi(v,r,\omega)\right)^2+\left(\partial_r\psi(v,r,\omega)\right)^2\lesssim e^{-2(\min\{\kappa_+,p\}-\epsilon)v}\;,
\end{equation}
for all functions of the form~\eqref{psi} and any $i\in\{1,2,3\}$.
\end{thm}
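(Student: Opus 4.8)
The plan is to set up an iteration scheme in which Proposition~\ref{propImprovedDecay} is applied repeatedly, starting from the base decay rate $\bar\kappa_0$ provided by Proposition~\ref{propDecay0}, and to show that the resulting sequence of admissible rates converges to $2\min\{\kappa_+,p\}$. Concretely, I would define a sequence $(\bar\kappa_n)_{n\geq 0}$ as follows. Set $\bar\kappa_0 := \min\{\kappa_+,2p\}$, so that by Proposition~\ref{propDecay0} the hypothesis of Assumption~\ref{energyAss} holds with this $\bar\kappa_0$. Given $\bar\kappa_n$, Proposition~\ref{propImprovedDecay} (whose hypothesis $2p>\kappa_+$ we are assuming, and which requires only $m\geq 1$) tells us that Assumption~\ref{energyAss} holds with any $\bar\kappa < \max_{m\geq 1}\min\{\Delta_1(m),\Delta_2(m),2\kappa_+\}$, where $\Delta_1,\Delta_2$ are as in~\eqref{Delta1},~\eqref{Delta2} but now built from $\kappa = \min\{\bar\kappa_n,2\kappa_+\}$ rather than from the original $\kappa$. (Here one uses the elementary fact that the decay rate of a sum $e^{-\Delta_1 v_1}+e^{-\Delta_2 v_1}$ is $\min\{\Delta_1,\Delta_2\}$, so the improved rate is $\min\{\Delta_1(m),\Delta_2(m)\}$, intersected with $2\kappa_+$ because that cap already appears in the pointwise estimates feeding Proposition~\ref{propImprovedDecay}.) I would then set $\bar\kappa_{n+1}$ to be this supremum and check monotonicity $\bar\kappa_{n+1}\geq \bar\kappa_n$.

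The heart of the argument is the analysis of the recursion $\bar\kappa_{n+1} = \sup_{m\geq 1}\min\{\Delta_1(m),\Delta_2(m)\}$ with $\kappa\leftarrow\min\{\bar\kappa_n,2\kappa_+\}$. I would first treat the case $p\geq\kappa_+$. Writing $x_n := \min\{\bar\kappa_n,2\kappa_+\}$, one has $\Delta_1(m) = \frac{(m-1)(2-\delta)\kappa_+ + x_n}{m}$, which as $\delta\to 0$ and then $m\to\infty$ tends to $2\kappa_+$ from below whenever $x_n < 2\kappa_+$; and $\Delta_2(m) = \frac{2p-\delta+(m-1)(1-\delta)\kappa_+}{m} \to \kappa_+$ as $m\to\infty$, but for moderate $m$ it exceeds $\kappa_+$ since $2p\geq 2\kappa_+$. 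The optimal $m$ balances the two: for a judicious finite choice $m=m_n$ one gets $\bar\kappa_{n+1}$ strictly larger than $x_n$ as long as $x_n<2\kappa_+$, and a short computation shows the gap $2\kappa_+ - x_{n+1}$ shrinks geometrically (e.g. taking $m_n=2$ at each step already gives $x_{n+1}$ roughly $\tfrac{1}{2}(2\kappa_+ + x_n)$ modulo $\delta$-errors), so $\bar\kappa_n \uparrow 2\kappa_+ = 2\min\{\kappa_+,p\}$. In the case $p<\kappa_+$ the binding constraint is $\Delta_2$: now $2p<2\kappa_+$, and one checks that $\sup_{m\geq 1}\min\{\Delta_1(m),\Delta_2(m)\}$ increases towards $2p$; the relevant observation is that $\Delta_2(m)>2p$ is impossible for large $m$ but $\Delta_2(1) = 2p-\delta$ while $\Delta_1$ is large, and an intermediate $m$ pushes the minimum up towards $2p$. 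Either way, for every $\epsilon>0$ there is an $n$ with $\bar\kappa_n > 2\min\{\kappa_+,p\}-\epsilon$, giving~\eqref{optEnergy} after one final application of the relevant proposition; note $r_0$ must be re-chosen (shrunk) at each step, but only finitely many steps are needed for a fixed $\epsilon$, so a valid $r_0$ persists.

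Finally, the pointwise estimate~\eqref{optPointWise} follows immediately: once~\eqref{optEnergy} is established, it says precisely that Assumption~\ref{energyAss} holds with $\bar\kappa = 2(\min\{\kappa_+,p\}-\epsilon)$ (up to relabelling $\epsilon$), and since $2(\min\{\kappa_+,p\}-\epsilon)\leq 2\kappa_+$ we have $\min\{\bar\kappa,2\kappa_+\}=\bar\kappa$, so Proposition~\ref{propPointDecayVO}, Corollary~\ref{corDelta} and Proposition~\ref{propPointDecayR} apply directly and yield the claimed bounds on $\partial_v\psi$, $\leo_i\psi$ and $\partial_r\psi$ for all $\psi$ of the form~\eqref{psi}.

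I expect the main obstacle to be bookkeeping the interplay of the three parameters $(\delta, m, n)$ cleanly: one must choose $\delta = \delta_n$ and $m=m_n$ at each stage so that the $\delta$-losses do not accumulate and so that the sequence $\bar\kappa_n$ provably converges to the sharp value $2\min\{\kappa_+,p\}$ rather than stalling at some smaller fixed point. Establishing that the recursion's only fixed point (above the starting value) is $2\min\{\kappa_+,p\}$, and that convergence is genuinely achieved, is the crux; everything else is an application of the already-proven propositions.
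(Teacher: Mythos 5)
Your proposal is correct and follows essentially the same route as the paper: iterate Proposition~\ref{propImprovedDecay} starting from the rate supplied by Proposition~\ref{propDecay0}, choose $m_n$ at each stage to balance $\Delta_1$ and $\Delta_2$ (your $\sup_m\min\{\Delta_1,\Delta_2\}$ is attained exactly at the paper's crossing point $m_n=(2p-\delta-\kappa_n+\kappa_+)/\kappa_+$), and run a fixed-point/monotonicity analysis of the resulting recursion with $\delta$ fixed small in terms of $\epsilon$, splitting into the cases $p\geq\kappa_+$ and $p<\kappa_+$. The concluding deduction of~\eqref{optPointWise} from Propositions~\ref{propPointDecayVO} and~\ref{propPointDecayR} is also exactly the paper's.
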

\begin{proof}

Note first that~\eqref{optPointWise} is an immediate consequence of~\eqref{optEnergy} together with Proposition~\ref{propPointDecayVO} and Proposition~\ref{propPointDecayR}.

From the discussion prior to the statement of this theorem we see that the main result of interest~\eqref{optEnergy} follows from  the following:
\begin{lem}
 For any $\epsilon>0$, there exists a $\delta>0$ and a sequence  $\{m_n\}_{n\in\bbN_0}$, satisfying $m_n\geq1$, $\forall n$,
 such that the sequence $\{\kappa_n\}_{n\in\bbN_0}$ defined by
$$
\left\{
\begin{array}{l}
\kappa_0<\kappa_+ \\
 \kappa_{n+1}=
 \Delta_1(m_n,\kappa_n) =\frac{1}{m_n}\left[(m_n-1)(2-\delta)\kappa_++\kappa_n\right]\;,
\end{array}
\right.
$$
satisfies the following:
\begin{enumerate}
\item  $\kappa_{n+1}=\Delta_2(m_n)$, $\forall n$, and
\item there exists $N\in\bbN$ for which $\kappa_N>2\min\{\kappa_+,p\}-\epsilon$\;.
\end{enumerate}
\end{lem}
\begin{proof}

Fix $\epsilon>0$ and choose $m_n$ as the solution of
$$\Delta_1(m_n,\kappa_n)=\Delta_2(m_n)\;,$$
i.e.,
\begin{equation}
\label{mDef}
m_n=\frac{2p-\delta-\kappa_n+\kappa_+}{\kappa_+}\;.
\end{equation}
Then by construction $\kappa_{n+1}=\Delta_2(m_n)$,  $\forall n$.
Observe also that if
\begin{equation}
\label{why2p>k}
2p>\kappa_n\;,
\end{equation}
we can choose $\delta$ sufficiently small in order to make sure that the condition $m_n\geq 1$ holds.

For the choice~\eqref{mDef} the recursive rule for our sequence becomes
\begin{equation}
\label{knDef}
\kappa_{n+1}
=\frac{\kappa_+}{2p-\delta+\kappa_+-\kappa_n}\left[(2-\delta)(2p-\delta)-(1-\delta)\kappa_n\right] \;.
\end{equation}

If it happens that there exists $N\in\bbN$ such that $\kappa_N\geq 2\min\{\kappa_+,p\}$ then we are done.

So we are left with analyzing the situation when
\begin{equation*}
 \kappa_n<2\min\{\kappa_+,p\}\;,\;\forall n\;.
\end{equation*}
We consider two cases:

{\em Case 1:} $p\geq\kappa_+$.

In this case, our latest assumption gives
$$\kappa_n<2\kappa_+\;,\;\forall n\;.$$
Then the denominator of~\eqref{knDef} reads
$$2p-\delta+\kappa_+-\kappa_n>2p-\delta-\kappa_+$$
which can be made positive by choosing $\delta$ small enough. Consequently
\begin{eqnarray*}
 \kappa_{n+1}-\kappa_n\geq 0
 &\Leftrightarrow&
 \kappa_+\left[(2-\delta)(2p-\delta)-(1-\delta)\kappa_n\right]
-\kappa_n\left[2p-\delta+\kappa_+-\kappa_n\right]\geq 0\;.
\end{eqnarray*}
But we have
\begin{eqnarray*}
&& \kappa_+\left[(2-\delta)(2p-\delta)-(1-\delta)\kappa_n\right]
-\kappa_n\left[2p-\delta+\kappa_+-\kappa_n\right]=0
\\
&\Leftrightarrow&
\kappa_n=\kappa_{\pm}^{\delta}
:=\frac{1}{2}\big[(2-\delta)\kappa_++2p-\delta\pm
\left|(2-\delta)\kappa_+ - 2p-\delta\right|\big]\;.
\end{eqnarray*}
Since we are analyzing the case $p\geq\kappa_+$ we get
$$\kappa_-^{\delta}=(2-\delta)\kappa_+ \quad\text{and}\quad \kappa_+^{\delta}=2p-\delta.$$
We will now proceed by contradiction by assuming  that for all $\delta>0$ we have (for the $\epsilon>0$ fixed in beginning of the proof)
$$\kappa_n<2\kappa_+-\epsilon\;,\;\forall n\;.$$
Then, by making $\delta$ small we can make sure that
$2\kappa_+-\epsilon< \kappa_-^{\delta}$. But then $\kappa_n$ is increasing and since it is bounded it is therefore convergent;
in fact the previous computations show that it converges to $\kappa_-^{\delta}$ which gives the desired contradiction.

{\em Case 2:} $p<\kappa_+$.

The analysis of this case is similar to the previous with $\kappa_-^{\delta}=2p-\delta$.
\end{proof}
This concludes the proof of Theorem~\ref{thmRNotEst}.
\end{proof}
%%%%%%%%%%%%%%%%%%%%%%%%%%%%%%%%%%%%%%%%% going to the cauchy horiozn
%%%%%%%%%%%%%%%%%%%%%%%%%%%%%%%%%%%%%%%%%%%%%%%%%%%%%%%%%%%%%%%%%%%%%%%%%%%%

\subsection{Propagating the energy estimates to the Cauchy horizon}
\label{secProp}

We can now follow Sbierski~\cite{jan1} all the way to the Cauchy horizon. For the sake of completeness we present a brief sketch of how
the energy estimates propagate; the relevant details can be found in~\cite{jan1}[pp. 113--118].

First one uses the fact that
$$\int\limits_ {\Sr(u,\infty)} J_{\mu}^{N_1}[\psi] n^{\mu}_{\Sigma_r}\sim \int\limits_ {\Sr(u,\infty)} J_{\mu}^{N_2}[\psi] n^{\mu}_{\Sigma_r}\;,$$
for any pair, $N_1$, $N_2$, of timelike and future directed vector fields that commute with the stationary Killing vector field $T$.
Moreover for any such vectors, if $r_-<r_1<r_0<r_+$, we have the estimate
\begin{eqnarray*}
\int\limits_ {\Sigma_{r_1}(u,\infty)} J_{\mu}^{N_i}[\psi] n^{\mu}_{\Sigma_r}
&\leq&
C(r_0,r_1)\int\limits_ {\Srn(u,\infty)} J_{\mu}^{N_i}[\psi] n^{\mu}_{\Sigma_r}
\\
&\lesssim&
e^{-2(\min\{\kappa_+,p\}-\epsilon)v_{r_0}(u)}
\\
&\lesssim&
e^{-2(\min\{\kappa_+,p\}-\epsilon)u}\;.
\end{eqnarray*}
Where we have used the fact that $r^*(r_0)=0$.
One then considers, for $\delta>0$, the vector field
\begin{equation}
 \label{modBS}
\check N_{\delta}=e^{(1+\delta)\kappa_- u}\check N\;,
\end{equation}
where $\check N$ is the blue-shift vector field, which is timelike, future directed and such that $[\check N,T]=0$.
It turns out that
$$K^{\check N_{\delta}}[\psi]\geq 0\;,$$
in $r_-\leq r\leq r_1$, for some $r_1$ sufficiently close to $r_-$.

Stokes' theorem applied to the divergence of $\check N_{\delta}$, in the region defined by Figure~\ref{final},
{\begin{figure}[ht]
\centering
\includegraphics[width=0.4\textwidth]{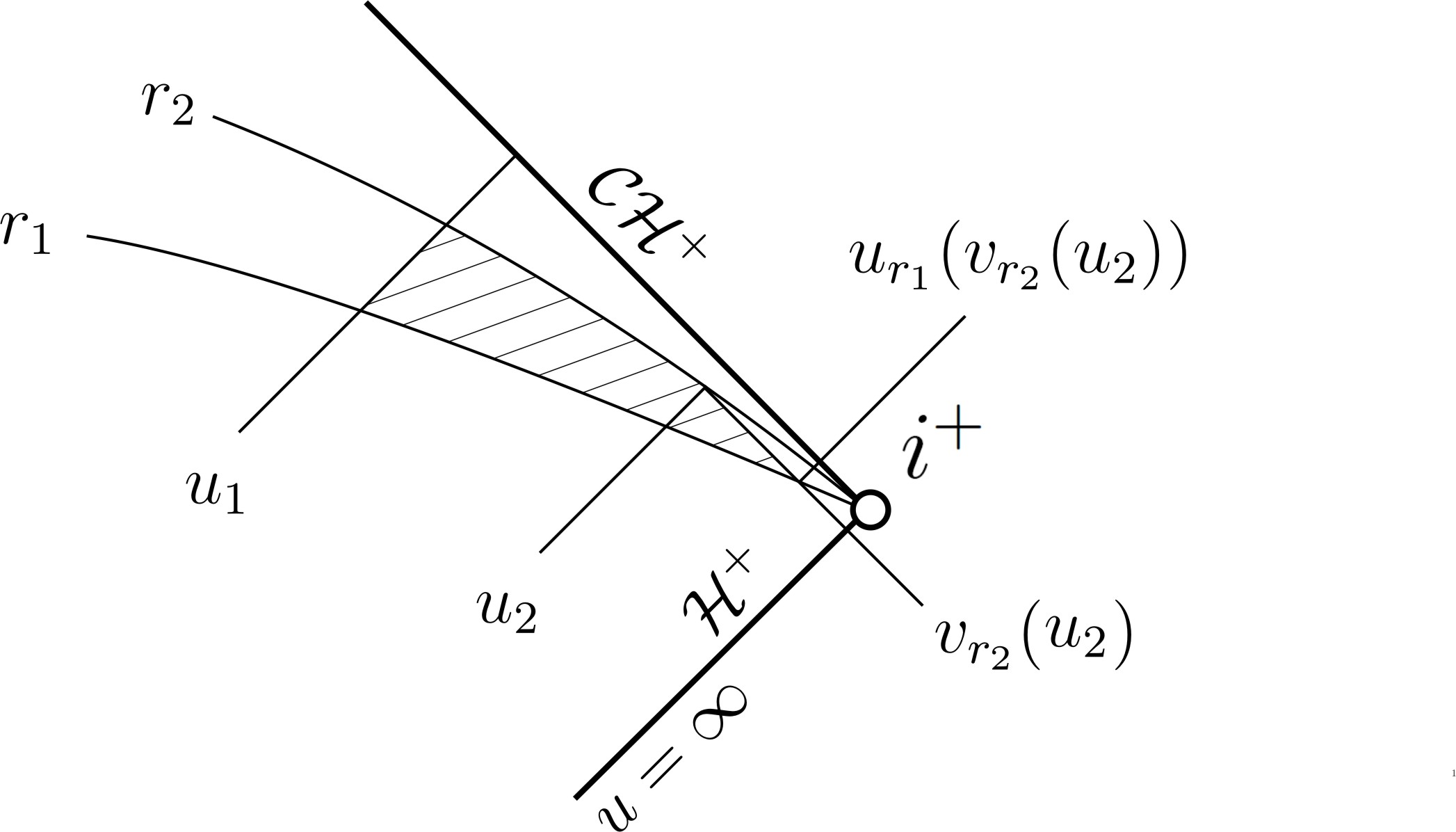}
\caption[]{Region in which we carry out divergence theorem is represented as the hatched area.}
\label{final}\end{figure}}
leads to
\bea
\label{lastEnergyEst}
\int\limits_ {{\cal C}_{u_1}(r_2,r_1)} e^{(1+\delta)\kappa_- u} J_{\mu}^{\check N}[\psi] n^{\mu}_{{\cal C}_u}
+ \int\limits_{\Sigma_{r_2}(u_1,u_2)} e^{(1+\delta)\kappa_- u} J_{\mu}^{\check N}[\psi] n^{\mu}_{\Sigma_r}
&\leq& \int\limits_{\Sigma_{r_1}(u_1,u_{r_1}(v_{r_2}(u_2)))} e^{(1+\delta)\kappa_- u} J_{\mu}^{\check N}[\psi] n^{\mu}_{\Sigma_r}\nonumber \\
&\leq& \int\limits_{\Sigma_{r_1}(u_1,\infty)} e^{(1+\delta)\kappa_- u} J_{\mu}^{\check N}[\psi] n^{\mu}_{\Sigma_r}\;.
\eea
Now, in view of the previous discussion, Theorem~\ref{thmRNotEst} and Lemma~\ref{lemExpDecay}, we see that provided that
\begin{equation}
2\min\{p,\kappa_+\}>\kappa_-\;,
\end{equation}
for any $\delta>0$ sufficiently small we have
\begin{equation}
\label{finalDecay1}
\int\limits_{\Sigma_{r_1}(u_1,\infty)} e^{(1+\delta)\kappa_- u} J_{\mu}^{\check N}[\psi] n^{\mu}_{\Sigma_r}\lesssim e^{-(2\min\{p,\kappa_+\}-\kappa_--\delta)u_1}\;.
\end{equation}
Inserting the last estimate into~\eqref{lastEnergyEst} gives
\begin{equation}
\label{finalDecay2}
\int\limits_ {{\cal C}_{u_1}(r_2,r_1)}  J_{\mu}^{\check N}[\psi] n^{\mu}_{{\cal C}_u}
+ \int\limits_{\Sigma_{r_2}(u_1,u_2)}  J_{\mu}^{\check N}[\psi] n^{\mu}_{\Sigma_r}
 \lesssim e^{-(2\min\{p,\kappa_+\}-\delta)u_1}\;,
\end{equation}
which gives~\eqref{mainEst}. As a consequence we see that $\phi\in H_{loc}^1(\Mint\cup\cH^+\cup{\cal CH}^+)$.

Uniform boundedness~\eqref{mainBound} then follows by a simple adaptation of the initial part of the proof of Proposition~\ref{propPointDecayVO}
(compare with~\cite{anne}[Section 14]) and the continuity statement, in Theorem~\ref{mainThm}, can then be proven by using the ideas of~\cite{anne}[Section 15].

\section*{Acknowledgement}

We would like to thank: Jan Sbierski for sharing is PhD thesis and for useful discussions; Pedro Gir\~ao, Jos\'e Nat\'ario and Jorge Drumond Silva for various comments concerning a preliminary version of this work; Jaques Smulevici for taking the time to explain to us a version of Lemma~\ref{decayLemma}; Peter Hintz for various clarifications concerning his work.

Both authors thank the Institute Henri Poncair\'e, for the hospitality and partial financial support during the program ``Mathematical General Relativity, were this research was started.

This work was partially supported by FCT/Portugal through UID/MAT/04459/2013 and grant (GPSEinstein) PTDC/MAT-ANA/1275/2014.

%%%%%%%%%%%%%%%%%%%%%%%%%%%%%%%%%%%%%%%%%%%%%%%%%%%%%%%%%%%%%%%%%%%%%%%%%%%%%%

%%%%%%%%%%%%%%%%%%%%%%%%%%%%%%%%%% THE END

\end{document}